\documentclass[12pt, a4paper]{article}
\usepackage[utf8]{inputenc}
\usepackage[T1]{fontenc}
\usepackage{amsmath, amssymb, amsthm}
\usepackage{mathtools}
\usepackage[pagewise]{lineno}
\usepackage{lmodern}
\usepackage[margin=2.5cm]{geometry}
\usepackage{hyperref}
\usepackage{enumitem}
\usepackage{xcolor}
\usepackage{tcolorbox}
\usepackage{booktabs}
\usepackage{array}

\usepackage{microtype}
\usepackage{listings}

\hypersetup{
  colorlinks = true,
  linkcolor  = blue!60!black,
  citecolor  = blue!60!black,
  urlcolor   = blue!60!black
}

\tcbuselibrary{skins, breakable}

\theoremstyle{plain}
\newtheorem{theorem}{Theorem}[section]
\newtheorem{lemma}[theorem]{Lemma} 
\newtheorem{corollary}[theorem]{Corollary}

\newtheorem{definition}[theorem]{Definition}
\newtheorem{remark}[theorem]{Remark}
\newtheorem{example}[theorem]{Example}
\newtheorem{proposition}[theorem]{Proposition}

\newtcolorbox{researchbox}[2][]{
  colback   = blue!4!white,
  colframe  = blue!40!black,
  fonttitle = \bfseries,
  title     = {#2},
  breakable,
  #1
}

\newtcolorbox{notebox}{
  colback  = orange!6!white,
  colframe = orange!60!black,
  fonttitle = \bfseries,
  title    = {Remark},
  breakable
}

\usepackage[textsize=scriptsize]{todonotes}

\newcommand{\F}{\mathbb{F}}
\newcommand{\Fq}{\mathbb{F}_{q}}
\newcommand{\Fqm}{\mathbb{F}_{q^m}}

\newcommand{\Hull}{\operatorname{Hull}}
\newcommand{\Tr}{\operatorname{Tr}}

\newcommand{\rank}{\operatorname{rank}}
\newcommand{\im}{\operatorname{im}}
\newcommand{\Clm}{C_{\lambda,\mu}}
\newcommand{\phiop}{\phi_{\lambda,\mu}}

\newcommand{\PP}{\mathbb{P}}
\newcommand{\ochar}{\operatorname{char}}

\title{%
  \textbf{On the hull of linearized polynomial codes}
}
 \author{\Large \bf Daniele Bartoli\thanks{Department of Mathematics and Computer Science, University of Perugia,  06123 Perugia, Italy;
\texttt{daniele.bartoli@unipg.it}}
\and \Large \bf Giovanni Giuseppe Grimaldi\thanks{Department of Mathematics and Computer Science, University of Perugia,  06123 Perugia, Italy;
\texttt{giovannigiuseppe.grimaldi@unipg.it}}
\and \Large \bf Pantelimon St\u anic\u a\thanks{Applied Mathematics Department, Naval Postgraduate School,
Monterey, CA 93943, USA;
\texttt{pstanica@nps.edu}}}
\date{}

\begin{document}

\maketitle
 \begin{abstract}
Motivated by applications in {which hull dimensions determine complementary Gram ranks, including standard constructions of }entanglement-assisted quantum error-correcting codes (EA-QECCs), we study the hull of two families of $\mathbb{F}_q$-linear codes defined by $q$-polynomial operators over $\mathbb{F}_{q^m}$ using a unified Gram-matrix approach.  For image codes $\mathcal{C}(\boldsymbol{\alpha})=\im \Phi_{\boldsymbol{\alpha}}(x)$, where $\Phi_{\boldsymbol{\alpha}}=\sum_i\alpha_iF_i$ is a linear combination of fixed $q$-polynomials, our master hull--rank theorem establishes $\dim\Hull(\mathcal{C}(\boldsymbol{\alpha})) =\rank(\Phi_{\boldsymbol{\alpha}})-\rank(G(\boldsymbol{\alpha}))${.  For base-field parameters, }the associated Gram matrix {satisfies $G_T=A_T^{\top}G_0A_T$; hence its determinant vanishes exactly when the defining operator is singular}.  Specializing to {$C_{\lambda,\mu}=\im \phi_{\lambda,\mu}$ with $\phi_{\lambda,\mu}(x)=\lambda x+\mu L(x)$ }yields a quadratic {Gram pencil whose determinant detects the singular fibres }in $\mathbb{P}^1(\mathbb{F}_q)${, while the LCD condition on a singular fibre is determined by the rank difference in the master theorem}.  In a parallel setting for $\mathbb{F}_{q^m}$-linear rank-distance codes equipped with the Delsarte inner product, an analogous {Gram matrix }directly determines the hull dimension.  For the Frobenius twist $L=X^{q^k}$, $d=\gcd(k,m)$, the circulant structure {yields an explicit factorization of the discriminant over a splitting field and shows that its base-field roots are precisely the singular parameters.  At such a root }the hull dimension {is $0$ or $d$, according to the multiplicity of the corresponding Gram zero.  Over the full parameter space $\mathbb{P}^1(\mathbb{F}_{q^m})$, the adjoint description gives $\Hull(C_{\lambda,\mu})=\im\phi_{\lambda,\mu}\cap\ker\phi_{\lambda,\mu}^{\dagger}$, with dimension between $0$ and $d$ on the singular locus.  A }worked example over $\mathbb{F}_{64}$ and {the SageMath computations }in the appendix {provide exhaustive verification for that fixed field}.
\end{abstract}

\noindent \thanks{\textit{2020 MSC:} 94B05; 11T06; 11T71}

\noindent \thanks{{\textit{Keywords:} Finite fields; linearized polynomials; EA-QECCs; RD codes}}


\section{Introduction}

Let $q$ be a $p$ (prime) power and let $\Fqm$ denote the degree-$m$ extension of $\Fq$.
The \emph{hull} of a linear code $C$ is
\[
  \Hull(C) \;=\; C \cap C^{\perp},
\]
where $C^{\perp}$ is the dual with respect to a non-degenerate bilinear form.
This notion was introduced by Assmus and Key~\cite{AK92} in the context of
classifying finite projective planes, where the hull of the incidence code carries
geometric information about the plane.
The hull dimension is a fundamental invariant of a linear code: $\Hull(C)=\{0\}$
characterizes \emph{linear complementary dual} (LCD) codes, $C\subseteq C^{\perp}$
characterizes self-orthogonal codes, and $C=C^{\perp}$ characterizes self-dual codes.

LCD codes were introduced by Massey~\cite{Massey92}, who showed that they are
asymptotically good and provide an optimal solution to the two-user binary adder channel.
Sendrier~\cite{Sendrier04} subsequently proved that LCD codes meet the asymptotic
Gilbert--Varshamov bound.
Interest in the hull was renewed by the work of Carlet and
Guilley~\cite{CG16}, who demonstrated that binary LCD codes offer
effective resistance against side-channel and fault injection attacks in
hardware implementations of symmetric-key cryptography.
This observation initiated an intensive study of LCD codes and, more broadly, of
codes with prescribed hull dimension over finite fields~\cite{CMTQP18,CLM19,GJG18}.

The relationship between the hull and quantum error correction is {also useful, but the precise ebit count depends on the EAQECC construction}.  Calderbank and Shor~\cite{CS96} and, independently, Steane~\cite{Steane96} showed that self-orthogonal codes give rise to quantum stabilizer codes via the CSS construction.  {In the }entanglement-assisted {setting of }Brun, Devetak, and Hsieh~\cite{BDH06}, {for a classical $[n,k]$ code $C\subseteq\Fq^n$ with Euclidean dual and parity-check matrix $H$, one standard construction uses
$ c=\rank(HH^{\top})=n-k-\dim\Hull(C)$ ebits; the corresponding generator-matrix Gram rank is $\rank(GG^{\top})=k-\dim\Hull(C)$.  Thus the hull determines the relevant complementary Gram rank once the classical code, duality, and EAQECC construction are specified; it is not, in general, itself }the number of {required ebits}.

In the rank-metric setting, rank-distance codes were introduced by Delsarte~\cite{Delsarte78} and, independently, by Gabidulin~\cite{Gabidulin85},
who constructed the family of maximum rank distance (MRD) codes now bearing his name. These codes can be  defined using $q$-polynomials (linearized polynomials), and their duality theory uses the Delsarte inner product on coefficient vectors.
The hull of a rank-metric code with respect to the Delsarte dual has received increasing attention in recent years~\cite{HJ26}, prompted by the same quantum
coding applications that drove the Hamming-metric hull theory.

This paper studies the hull of linear codes defined by $q$-polynomial operators over $\Fqm$, using Gram matrices in two parallel settings. For image codes of the form $\mathcal{C}(\boldsymbol\alpha)=\im\Phi_{\boldsymbol\alpha}(x)$
with $\Phi_{\boldsymbol\alpha}=\sum_i \alpha_iF_i$, the central object is the \emph{quadratic Gram form}
\[
  G(\boldsymbol\alpha) \;=\; \sum_{i,j}\alpha_i\alpha_j\,\Gamma_{ij},
  \qquad \Gamma_{ij}=[\Tr(F_i(e_s)F_j(e_t))]_{s,t},
\]
and the hull dimension is shown to equal
$\rank(\Phi_{\boldsymbol\alpha})-\rank(G(\boldsymbol\alpha))$.
In the two-parameter family $C_{\lambda,\mu}=\im \phi_{\lambda, \mu}$, where $\phi_{\lambda, \mu}(x)=\lambda x + \mu L(x)$, this yields
a quadratic matrix pencil whose determinant {detects singular fibres of the base-field operator pencil; the LCD condition on a singular fibre is determined by the rank difference in the master theorem}. For rank-distance codes
$\mathcal{C}=\langle X,F_1,\ldots,F_k\rangle_{\Fqm}$,
a separate but analogous generator-Gram-matrix argument yields a $(k\times k)$ Gram matrix over $\Fqm$ whose rank determines the Delsarte hull.
The two families are therefore governed by the same Gram-matrix philosophy, although not by a single literal specialization theorem.

The {elementary identity relating the radical of a restricted non-degenerate bilinear form to the rank of its Gram matrix is standard.  The contribution here is not that identity itself, but its systematic use for linearized-polynomial image codes together with the explicit operator factorization of the Gram pencil, the resulting singularity stratification for the two-parameter family, the splitting-field analysis of the Frobenius-twist circulants, and the separate Delsarte generator-Gram formulation for rank-distance codes.
}
The paper is organized as follows.
Section~\textup{\ref{sec:background}} collects the necessary algebraic background.
Section~\textup{\ref{sec:general}} introduces the general linear code construction, establishing the quadratic Gram form, the hull stratum decomposition, and the master hull--rank theorem.
Section~\textup{\ref{sec:rdcodes}} specializes to $\Fqm$-linear rank-distance codes,
deriving explicit hull criteria from the corresponding generator Gram matrix.
Section~\textup{\ref{sec:twoparam}} then turns to the two-parameter family $C_{\lambda,\mu}$,
viewed as the image/support code attached to the operator $\phi_{\lambda,\mu}$, and analyzes the resulting pencil discriminant and its roots.
Section~\textup{\ref{sec:frobenius}} treats the Frobenius twist $L=X^{q^k}$ in detail.
Section~\textup{\ref{sec:conclusions}} collects conclusions and open problems.

\section{Algebraic background}
\label{sec:background}

A \emph{$q$-polynomial} (or \emph{linearized polynomial}) over $\Fqm$ is a polynomial of
the form $L = \sum_{i=0}^{s} a_i X^{q^i}$ with $a_i \in \Fqm$. The \emph{$q$-rank} of $L$ is $\rank(L)\coloneqq\dim_{\Fq}\im L(x)$, where $L(x)$ is the $\Fq$-linear endomorphism $x \in \F_{q^m} \mapsto \sum_{i=0}^{s}a_i x^{q^i} \in \F_{q^m}$; $L$ is called \emph{invertible} if $\rank(L)=m$. The composition of two $q$-polynomials is again a $q$-polynomial, making the set of $q$-polynomials modulo $X^{q^m}-X$ into the
\emph{Ore polynomial ring} $\Fqm[X;\sigma]$, where $\sigma\colon x\mapsto x^q$ is the Frobenius automorphism.   The simplest non-trivial example is the Frobenius twist $L=X^{q^k}$, $1\le k\le m-1$, which is invertible. {For $\eta\in\Fqm$ and $d=\gcd(k,m)$, the equation $\alpha^{q^k}=\eta\alpha$ has a solution $\alpha\in\Fqm^*$ if and only if $\eta^{(q^m-1)/(q^d-1)}=1$.}

The ambient inner product throughout is the \emph{trace bilinear form}
\[
  \langle a,b\rangle \;=\; \Tr_{\Fqm/\Fq}(ab), \qquad a,b\in\Fqm,
\]
where $\Tr_{\Fqm/\Fq}(x)=x+x^q+\cdots+x^{q^{m-1}}$.  This form is non-degenerate,
symmetric, and $\Fq$-bilinear; it identifies $\Fqm$ with its own $\Fq$-linear dual.
The \emph{adjoint} of an $\Fq$-linear operator $T$ with respect to this form is the
unique operator $T^\dagger$ satisfying $\langle T(a),b\rangle=\langle a,T^\dagger(b)\rangle$
for all $a,b\in\Fqm$; for $L=\sum_i a_i X^{q^i}$ one has
$L^\dagger=\sum_i a_i^{q^{m-i}}X^{q^{m-i}}$.

Fix a $\Fq$-basis $\{e_1,\ldots,e_m\}$ of $\Fqm$.  The \emph{Gram matrix} of an
$\Fq$-linear operator $T\colon\Fqm\to\Fqm$ is
\[
  G_T \;=\; \bigl[\Tr(T(e_i)\,T(e_j))\bigr]_{1\le i,j\le m}
  \;\in\;\mathcal{M}_{m\times m}(\Fq).
\]
If $T:\F_{q^m} \longrightarrow \F_{q^m}$ is an $\F_q$-linear operator and $\{e_1,\ldots,e_m \}$ is an $\F_q$-basis of $\F_{q^m}$, then putting $T(e_s)=\sum_{i=1}^m A_{is}e_i$ we get {$(G_T)_{s,t}=\sum_{i=1}^m\sum_{j=1}^m A_{is}A_{jt}\Tr(e_i e_j)$}. Therefore, if $G_0=[\Tr(e_s e_t)]_{1 \leq s,t \leq m}$, then $G_T=A^{\top} G_0 A$ where $A$ is the matrix associated to the operator $T$ with respect to the basis $\{ e_1,\ldots,e_m \}$. In particular, $\det(G_T)=\det(G_0) \det(A)^2$ and since the trace form is non-degenerate then the matrix $G_0$ is invertible and the matrix $G_T$ is non-singular if and only if $A$ is non-singular, or equivalently $T$ is bijective.

Note that the matrix $G_T$ records the pullback of the trace form to the domain of $T$.
If $C=\im(T)\subseteq\Fqm$, then
\begin{equation}
  \label{eq:hullrank-basic}
  \dim_{\Fq}\Hull(C)
  \;=\;
  \rank(T)-\rank(G_T).
\end{equation}
Indeed, since $\ker(G_T)=T^{-1}(C^{\perp})$, then
$\dim\ker(G_T)=\dim\ker(T)+\dim(C\cap C^{\perp})$; since
$\rank(G_T)=m-\dim\ker(G_T)$ and $\rank(T)=m-\dim\ker(T)$, the identity follows.
This relation is the engine of the image-code results below.

A \emph{normal basis} of $\Fqm/\Fq$ is a basis of the form
$\{\beta,\beta^q,\ldots,\beta^{q^{m-1}}\}$ for some $\beta\in\Fqm$ (a \emph{normal element}).
Normal bases always exist by the Normal Basis Theorem, and they are particularly convenient
because the trace form satisfies $\Tr(\beta^{q^i}\beta^{q^j})=\Tr(\beta^{1+q^{j-i}})$,
a function only of $j-i\pmod{m}$.  Consequently, when an operator $T$ commutes with
Frobenius, the corresponding Gram matrix $G_T$ becomes circulant in a normal basis.
This will be exploited in Section~\textup{\ref{sec:frobenius}}.

\section{Square rank-metric  codes and their codewords}
\label{sec:general}

Rank-metric codes are subsets of the vector space $\mathbb{F}_q^{m \times n}$ equipped with a distance function 
$$
(A,B) \in \mathbb{F}_q^{m \times n} \times \mathbb{F}_q^{m \times n} \mapsto \rank(A-B) \in \{ 0, \ldots, \min\{m,n\} \}.
$$
Linear codes that achieve the maximum possible cardinality for a given minimum distance $d$, reaching the Singleton-like bound $|\mathcal{C}| \leq q^{\max\{n,m\}(\min\{n,m\}-d+1)}$, are called Maximum Rank-Distance (MRD) codes \cite{Delsarte78, Gabidulin85}. These codes have drawn significant attention in recent decades due to their critical applications in network coding and cryptography \cite{koetter, wachterzeh}. When $m=n$, these square rank-metric codes can be naturally represented as sets of $\mathbb{F}_q$-linearized polynomials over $\mathbb{F}_{q^n}$ \cite[Sections 2.2 and 2.3]{articolosequenze}. From now on, we will adopt this polynomial description for the remainder of the paper.

 Let $F_0, F_1, \ldots, F_k \in \Fqm[X;\sigma]$ be fixed $q$-polynomials, and let $\mathcal{C}=\langle F_0, F_1, \ldots, F_k \rangle_{\mathbb{F}_{q^m}}$ be the corresponding $\mathbb{F}_{q^m}$-linear rank-metric code. Given a vector $\boldsymbol{\alpha} = (\alpha_0, \alpha_1, \ldots, \alpha_k) \in \Fqm^{k+1}$, we define the linearized polynomial $$\Phi_{\boldsymbol{\alpha}} := \sum_{i=0}^{k} \alpha_i F_i$$ and let $\mathcal{C}(\boldsymbol{\alpha}) := \im \Phi_{\boldsymbol{\alpha}}(x)$. By representing $\Phi_{\boldsymbol{\alpha}}$ as the transpose of the standard matrix associated with $\Phi_{\boldsymbol{\alpha}}(x)$ (with respect to a fixed $\mathbb{F}_q$-basis of $\mathbb{F}_{q^m}$), the space $\mathcal{C}(\boldsymbol{\alpha})$ is exactly the rank-metric support $\textrm{supp}(\Phi_{\boldsymbol{\alpha}})$, see e.g. \cite{linearcutting}.


Note that we may also consider the  $\mathbb{F}_q$-linear rank-metric code $\mathcal{C}$ spanned by the same polynomials, $\langle F_0, F_1, \ldots, F_k \rangle_{\mathbb{F}_q}$; the intended field of linearity will be clear from the context. Such a code is a subcode of $\langle F_0, F_1, \ldots, F_k \rangle_{\mathbb{F}_{q^m}}$.


  With respect to a fixed $\Fq$-basis
$\{e_1,\ldots,e_m\}$ of $\Fqm$, define the \emph{Gram matrix}
\begin{equation}
  \label{eq:gramgeneral}
  G(\boldsymbol{\alpha}) \;\coloneqq\;
  \bigl[\Tr(\Phi_{\boldsymbol{\alpha}}(e_s)\,\Phi_{\boldsymbol{\alpha}}(e_t))\bigr]_{s,t}
  \;\in\;\mathcal{M}_{m\times m}(\Fq).
\end{equation}
When $\boldsymbol{\alpha}\in\Fq^{k+1}$, the $\Fq$-linearity of the trace gives the
quadratic decomposition
\begin{equation}
  \label{eq:gramdecomp}
  G(\boldsymbol{\alpha}) \;:=\; \sum_{i=0}^{k}\sum_{j=0}^{k} \alpha_i\,\alpha_j\,\Gamma_{ij},
\end{equation}
where $\Gamma_{ij}\in\mathcal{M}_{m\times m}(\Fq)$ has entries
$(\Gamma_{ij})_{st} = \Tr(F_i(e_s)\,F_j(e_t))$.  These structure matrices depend only
on the generators $\{F_i\}$ and the basis; since $\Gamma_{ij}^\top=\Gamma_{ji}$,
the expression~\eqref{eq:gramdecomp} is a symmetric quadratic form in $\boldsymbol{\alpha}$
with values in $\mathcal{M}_{m\times m}(\Fq)$.  For $\boldsymbol{\alpha}\in\Fqm^{k+1}$
the decomposition~\eqref{eq:gramdecomp} need not hold, since
$\Tr(A\,B)\ne A\,\Tr(B)$ for $A\notin\Fq$ in general; however, the Gram
matrix~\eqref{eq:gramgeneral} and the hull--rank theorem below remain valid.

\begin{remark}
If {$T:\F_{q^m}\to\F_{q^m}$ is bijective, then $\im(T)=\F_{q^m}$.  Since }the trace form is non-degenerate{, $(\F_{q^m})^{\perp}=\{0\}$, and therefore $\Hull(\im T)=\{0\}$.  Thus every bijective operator in the image-code construction gives an LCD code.  Equivalently, if $A_T$ is the matrix of $T$, then $G_T=A_T^{\top}G_0A_T$ is nonsingular whenever $T$ is bijective}.
\end{remark}

\begin{theorem}[Master hull--rank theorem]
\label{thm:master}
For every $\boldsymbol{\alpha}\in\Fqm^{k+1}$,
\[
  \dim_{\Fq}\Hull\bigl(\mathcal{C}(\boldsymbol{\alpha})\bigr)
  \;=
  \rank\bigl(\Phi_{\boldsymbol{\alpha}}\bigr)
  \;-
  \rank\bigl(G(\boldsymbol{\alpha})\bigr).
\]
Equivalently,
\[
  \dim_{\Fq}\Hull\bigl(\mathcal{C}(\boldsymbol{\alpha})\bigr)
  \;=
  \dim_{\Fq}\mathcal{C}(\boldsymbol{\alpha})
  \;-
  \rank\bigl(G(\boldsymbol{\alpha})\bigr).
\]
In particular, $\mathcal{C}(\boldsymbol{\alpha})$ is LCD if and only if
$\rank\bigl(G(\boldsymbol{\alpha})\bigr)=\rank\bigl(\Phi_{\boldsymbol{\alpha}}\bigr)$, and
self-orthogonal if and only if $G(\boldsymbol{\alpha})=0$.
{In particular, if }$\Phi_{\boldsymbol{\alpha}}$ is bijective, then {$\mathcal{C}(\boldsymbol{\alpha})=\F_{q^m}$ and is automatically LCD; in this case $G(\boldsymbol{\alpha})$ is necessarily nonsingular}.
\end{theorem}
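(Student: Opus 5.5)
The plan is to reduce the theorem to the basic identity \eqref{eq:hullrank-basic}, applied to the single operator $T=\Phi_{\boldsymbol\alpha}(x)$. Fix $\boldsymbol\alpha\in\Fqm^{k+1}$. Although the coefficients $\alpha_i$ may lie outside $\Fq$, the map $x\mapsto\Phi_{\boldsymbol\alpha}(x)=\sum_i\alpha_iF_i(x)$ is still an $\Fq$-linear endomorphism of $\Fqm$, because each $F_i$ is a $q$-polynomial. By definition \eqref{eq:gramgeneral}, $G(\boldsymbol\alpha)$ is exactly $G_T$ for this $T$, and $\mathcal{C}(\boldsymbol\alpha)=\im T$. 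The quadratic decomposition \eqref{eq:gramdecomp} is never used, so its possible failure for $\boldsymbol\alpha\notin\Fq^{k+1}$ causes no problem.

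For completeness I will re-derive \eqref{eq:hullrank-basic} for this $T$. Regard $G_T$ as the matrix of the bilinear form $(u,v)\mapsto\Tr(T(u)T(v))$ on $\Fqm$. Then $v\in\ker G_T$ if and only if $\Tr(T(u)T(v))=0$ for all $u$, that is, if and only if $T(v)\in C^\perp$ with $C=\im T$. Hence $\ker G_T=T^{-1}(C^\perp)$, and $T$ restricted to this subspace maps onto $C\cap C^\perp$, because every element of $C\cap C^\perp$ has a preimage. This restriction has kernel $\ker T$. Therefore
\[
\dim\ker G_T=\dim\ker T+\dim\Hull(C).
\]
Subtracting both sides from $m$ and using rank--nullity gives the first formula. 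The second formula follows at once, since $\rank(\Phi_{\boldsymbol\alpha})=\dim_{\Fq}\im\Phi_{\boldsymbol\alpha}(x)=\dim_{\Fq}\mathcal{C}(\boldsymbol\alpha)$ by the definition of $q$-rank.

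The remaining statements are consequences of the formula.
\begin{itemize}
\item \emph{LCD.} The code is LCD exactly when the hull dimension is $0$, which by the formula means $\rank G=\rank\Phi_{\boldsymbol\alpha}$.
\item \emph{Self-orthogonal.} The code is self-orthogonal exactly when $\Hull(C)=C$, i.e. when the hull dimension equals $\dim C$. By the second formula this means $\rank G=0$, i.e. $G(\boldsymbol\alpha)=0$. One can also see this directly: $G=0$ says $\Tr(c c')=0$ for all $c,c'$ in the image, since the $T(e_s)$ span $C$.
\item \emph{Bijective case.} If $\Phi_{\boldsymbol\alpha}$ is bijective, then $C=\Fqm$ and $\Fqm^\perp=0$ by non-degeneracy, so the hull is $0$ and the code is LCD. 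The formula then forces $\rank G=m$, so $G$ is nonsingular; alternatively, $\det G=\det G_0\det(A)^2\neq0$ as in Section~\ref{sec:background}.
\end{itemize}

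No step here is a genuine obstacle. The only point requiring care is that, for $\boldsymbol\alpha\notin\Fq^{k+1}$, $\Phi_{\boldsymbol\alpha}$ is still $\Fq$-linear. This holds because scalar multiplication by $\alpha_i\in\Fqm$ is $\Fq$-linear, so $G(\boldsymbol\alpha)$ is a genuine $\Fq$-matrix Gram matrix of $T$, and the argument for \eqref{eq:hullrank-basic} applies verbatim.
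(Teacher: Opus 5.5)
Your proposal is correct and follows essentially the same route as the paper. Both arguments apply the identity $\dim\Hull(\im T)=\rank(T)-\rank(G_T)$, which rests on $\ker G_T=T^{-1}(C^\perp)$, to $T=\Phi_{\boldsymbol\alpha}$, and then read off the LCD, self-orthogonal and bijective cases. Your remark that $T$ restricted to $T^{-1}(C^\perp)$ maps onto $C\cap C^\perp$ with kernel $\ker T$ simply makes explicit the dimension count the paper leaves implicit.
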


\begin{proof}
The first identity is~\eqref{eq:hullrank-basic} applied to the operator $T=\Phi_{\boldsymbol{\alpha}}$.
The second follows from the equality
$\dim_{\Fq}\mathcal{C}(\boldsymbol{\alpha})=\rank(\Phi_{\boldsymbol{\alpha}})$.
If $G(\boldsymbol{\alpha})=0$, then every pair of vectors in
$\mathcal{C}(\boldsymbol{\alpha})$ is trace-orthogonal, so the code is self-orthogonal.
Conversely, if the code is self-orthogonal then the pulled-back form vanishes on the
whole domain, hence $G(\boldsymbol{\alpha})=0$.  The LCD criterion is exactly the
condition that the right-hand side be $0$, namely
$\rank(G(\boldsymbol{\alpha}))=\rank(\Phi_{\boldsymbol{\alpha}})${. If $\Phi_{\boldsymbol{\alpha}}$ is bijective, then both ranks equal $m$, consistently with $G(\boldsymbol{\alpha})=A_{\Phi}^{\top}G_0A_{\Phi}$ and $\det G(\boldsymbol{\alpha})\ne0$}.
\end{proof}

Since $G(t\boldsymbol{\alpha})=t^2G(\boldsymbol{\alpha})$ for any scalar $t\in \Fq^*$, the rank of $G(\boldsymbol{\alpha})$, and hence the hull dimension, is constant on $\Fq^*$-orbits in $\Fqm^{k+1}\setminus\{\mathbf{0}\}$.  Note that this is $\Fq^*$-scaling, not $\Fqm^*$-scaling, so the natural parameter space is the orbit space $(\Fqm^{k+1}\setminus\{\mathbf{0}\})/\Fq^*$, which is finer than $\PP^k(\Fqm)$.

\begin{definition}
\label{def:strata}
Let $\mathcal{P} = (\Fqm^{k+1}\setminus\{\mathbf{0}\})/\Fq^*$ denote the space of
$\Fq^*$-orbits in $\Fqm^{k+1}\setminus\{\mathbf{0}\}$.
For $h\in\{0,1,\ldots,m\}$, the \emph{$h$-th hull stratum} is
\[
  \mathcal{S}_h \;=\;
  \bigl\{[\boldsymbol{\alpha}]\in\mathcal{P} \;\colon\;
  \dim_{\Fq}\Hull(\mathcal{C}(\boldsymbol{\alpha}))=h\bigr\}.
\]
The strata $\{\mathcal{S}_h\}_{h=0}^m$ partition $\mathcal{P}$.
On the bijective locus
\[
  \mathcal{P}^{\mathrm{bij}}\;=\;\{[\boldsymbol{\alpha}]\in\mathcal{P}:\Phi_{\boldsymbol{\alpha}}\text{ is bijective}\},
\]
{every point is LCD.  More generally, }the LCD locus is
\[
{\mathcal{S}_0=}\bigl\{[\boldsymbol{\alpha}]\in\mathcal{P}:\Hull(\mathcal{C}(\boldsymbol{\alpha}))=\{0\}\bigr\},
\]
{and $\mathcal{P}^{\mathrm{bij}}\subseteq\mathcal{S}_0$.  The determinantal locus $\det G(\boldsymbol{\alpha})=0$ detects singularity of the defining operator when the matrix factorization $G=A_{\Phi}^{\top}G_0A_{\Phi}$ applies; it is therefore not a ``non-LCD locus'' on the bijective set.
}In the general parameter space, the LCD condition is the rank equality
$\rank(G(\boldsymbol{\alpha}))=\rank(\Phi_{\boldsymbol{\alpha}})$ from Theorem~\textup{\ref{thm:master}}.
The self-orthogonal locus is
$\{[\boldsymbol{\alpha}]: G(\boldsymbol{\alpha})=0\} \subseteq \mathcal{S}_{\rank(\Phi_{\boldsymbol{\alpha}})}$;
a self-orthogonal code $\mathcal{C}(\boldsymbol{\alpha})$ satisfies
$\Hull(\mathcal{C}(\boldsymbol{\alpha}))=\mathcal{C}(\boldsymbol{\alpha})$, so its hull
dimension equals $\dim_{\Fq}\mathcal{C}(\boldsymbol{\alpha})=\rank(\Phi_{\boldsymbol{\alpha}})$.
{Since a self-orthogonal code satisfies
$\mathcal{C}(\boldsymbol{\alpha})\subseteq\mathcal{C}(\boldsymbol{\alpha})^\perp$,
its dimension is at most $\lfloor m/2\rfloor$.  More generally,
$\Hull(\mathcal{C})$ is totally isotropic for every code $\mathcal{C}$, and therefore
$\dim_{\Fq}\Hull(\mathcal{C})\le\lfloor m/2\rfloor$.}
\end{definition}

The \emph{universal null space} $V^*=\bigcap_{i,j}\ker(\Gamma_{ij})$ satisfies
$V^*\subseteq\ker G(\boldsymbol{\alpha})$ for every $\boldsymbol{\alpha}${.  Combining this inclusion with the master hull--rank formula gives the corrected }parameter-free {estimate
}\[
  \dim\Hull \bigl(\mathcal{C}(\boldsymbol{\alpha})\bigr)
  \ge \dim V^*-\dim{\ker(\Phi}_{\boldsymbol{\alpha}}).
\]
{In particular, the stronger bound $\dim\Hull\ge\dim V^*$ is valid whenever $\Phi_{\boldsymbol{\alpha}}$ is injective.
}When $V^*=\{0\}$, the hull can in principle be made trivial by an appropriate choice
of parameters.

\section{The hull of rank-distance codes}
\label{sec:rdcodes}

In this section, we examine rank-metric codes in their full generality. 

We begin by noting that a rank-metric code composed of $m \times m$ matrices is non-degenerate if and only if it contains a codeword of full rank $m$, which algebraically corresponds to the existence of an invertible polynomial. Therefore, up to linear equivalence, we can safely assume that the polynomial $X$ is among the generators of any non-degenerate code.

Fixing a set of $q$-polynomials $F_1, \ldots, F_k \in \Fqm[X;\sigma]$ with zero $X$-coefficient, and setting $F_0 = X$, we define the \emph{rank-distance code} as the $\Fqm$-linear span,
\begin{equation}
  \label{eq:rdcode}
  \mathcal{C} \;=\; \langle X, F_1, \ldots, F_k \rangle_{\Fqm}
  \;\subseteq\; \Fqm[X;\sigma]/(X^{q^m}-X).
\end{equation}
This code constitutes an $\Fqm$-vector space of dimension at most $k+1$, endowed with the rank distance  $d(f,g) = \rank (f-g)$.

 The \emph{Delsarte dual} of $\mathcal{C}$ is typically defined using a trace form on the coefficients. However, since $\mathcal{C}$ is an $\Fqm$-linear space, the trace in the Delsarte pairing is redundant, and the condition simplifies to the $\Fqm$-bilinear inner product,
\[
  \mathcal{C}^\perp \;=\;
  \Bigl\{g \;\colon\; \sum_{\ell=0}^{m-1}g_\ell h_\ell=0 \;\;\forall\,h\in\mathcal{C}\Bigr\},
\]
where $g_\ell,h_\ell\in\Fqm$ are the $q$-polynomial coefficients.

The rank-distance setting is formally parallel to the image-code setting, but it is
not a literal specialization of Theorem~\textup{\ref{thm:master}}.  Here the code lives in the
$\Fqm$-vector space of $q$-polynomials, and the relevant bilinear form is the Delsarte
coefficient pairing.  The hull is therefore controlled directly by the Gram matrix of a
chosen generating set for that pairing.
The key structural observation is that the generator Gram matrix $\mathbf{M}$ of $\mathcal{C}$,
defined by
\begin{equation}
  \label{eq:rdgramsystem}
  \mathbf{M}_{ij} \;=\; \sum_{\ell=0}^{m-1}(F_i)_\ell(F_j)_\ell,
  \qquad 0\le i,j\le k,
\end{equation}
has a block structure determined by the presence of $F_0=X$.
Since $F_0=X$ has coefficient vector $(1,0,\ldots,0)^\top$ and each $F_j$ ($j\ge 1$)
has zero constant term, one has $\mathbf{M}_{00}=1$ and $\mathbf{M}_{0j}=0$ for $j\ge 1$.
Hence
\begin{equation}
  \label{eq:Mblock}
  \mathbf{M} \;=\; \begin{pmatrix}1&\mathbf{0}^\top\\\mathbf{0}&M\end{pmatrix},
  \qquad
  M_{jj'} \;=\; \sum_{\ell=1}^{m-1}(F_j)_\ell(F_{j'})_\ell,
\end{equation}
for $j,j'\in\{1,\ldots,k\}$.  The block $1$ forces $\alpha_0=0$ in every solution of
$\mathbf{M}\boldsymbol{\alpha}^\top=\mathbf{0}$, so the hull is determined entirely
by the $k\times k$ matrix $M$ over $\Fqm$.

\begin{theorem}
\label{thm:rdgeneral}
Let $\mathcal{C}=\langle X,F_1,\ldots,F_k\rangle_{\Fqm}$ with $X,F_1,\ldots,F_k$
$\Fqm$-linearly independent and each $F_j$ having zero constant term, and let $M$
be the $k\times k$ matrix over $\Fqm$ with entries
$M_{jj'}=\sum_{\ell=1}^{m-1}(F_j)_\ell(F_{j'})_\ell$.  Then:
\begin{enumerate}[label=\textup{(\roman*)}]
  \item $\dim_{\Fqm}\Hull(\mathcal{C}) = k - \rank_{\Fqm}(M)$.
  \item $\mathcal{C}$ is LCD if and only if $\det(M)\neq 0$.
  \item $\langle F_1,\ldots,F_k\rangle_{\Fqm}$ is self-orthogonal with respect to the
        Delsarte pairing if and only if $M=0$.
  \item $\mathcal{C}$ is never self-dual, and is never self-orthogonal as a whole.
\end{enumerate}
\end{theorem}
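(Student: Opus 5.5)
The plan is to reduce everything to the standard radical--rank identity for a symmetric bilinear form restricted to a subspace, now over $\Fqm$ rather than $\Fq$. Let $V=\Fqm^m$ be the space of coefficient vectors with the nondegenerate symmetric form $\langle g,h\rangle=\sum_\ell g_\ell h_\ell$. Let $\mathbf{F}\in\mathcal{M}_{m\times(k+1)}(\Fqm)$ have as columns the coefficient vectors of $X,F_1,\ldots,F_k$. By the independence hypothesis, $\mathbf{F}$ has full column rank $k+1$, so the map $\boldsymbol{\alpha}\mapsto\mathbf{F}\boldsymbol{\alpha}$ is an isomorphism $\Fqm^{k+1}\to\mathcal{C}$. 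The generator Gram matrix is $\mathbf{M}=\mathbf{F}^\top\mathbf{F}$.

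A codeword $c=\mathbf{F}\boldsymbol{\alpha}$ lies in $\mathcal{C}^\perp$ exactly when $\langle F_i,c\rangle=0$ for all $i$, that is, when $\mathbf{M}\boldsymbol{\alpha}=0$. Hence $\Hull(\mathcal{C})=\mathbf{F}(\ker\mathbf{M})$, and by injectivity $\dim_{\Fqm}\Hull(\mathcal{C})=(k+1)-\rank\mathbf{M}$. This is the analogue of \eqref{eq:hullrank-basic} with trivial kernel. The block decomposition \eqref{eq:Mblock} gives $\rank\mathbf{M}=1+\rank M$, which proves (i). In particular, every hull element has $\alpha_0=0$, as already noted before the theorem.

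Part (ii) follows at once from (i), since the hull is trivial iff $\rank M=k$ iff $\det M\neq0$. For (iii), apply the same reasoning to $\mathcal{D}=\langle F_1,\ldots,F_k\rangle_{\Fqm}$. $\mathcal{D}$ is self-orthogonal iff $\langle F_j,F_{j'}\rangle=0$ for all $j,j'$, by bilinearity, and these pairings are exactly the entries $M_{jj'}$ because the $\ell=0$ coefficients vanish.

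For (iv), note that $\langle X,X\rangle=\mathbf{M}_{00}=1\neq0$, so $X\in\mathcal{C}\setminus\mathcal{C}^\perp$. Thus $\mathcal{C}\not\subseteq\mathcal{C}^\perp$, so $\mathcal{C}$ is neither self-orthogonal nor self-dual. Equivalently, (i) gives $\dim\Hull(\mathcal{C})\le k<k+1=\dim\mathcal{C}$.

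The only step needing care is (i), and specifically the identification of the hull with $\mathbf{F}(\ker\mathbf{M})$. This requires both the nondegeneracy of the Delsarte pairing on all of $V$ (it is the standard dot product on $\Fqm^m$, so testing against the generators suffices) and the injectivity of $\mathbf{F}$. Without the independence hypothesis, one would have to subtract $\dim\ker\mathbf{F}$, exactly as in \eqref{eq:hullrank-basic}. Everything else is bookkeeping with the block form.
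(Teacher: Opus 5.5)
Your proposal is correct and follows the paper's proof essentially step for step. Both arguments reduce hull membership to the system $\mathbf{M}\boldsymbol{\alpha}=0$ by bilinearity, use linear independence to get $\dim\Hull(\mathcal{C})=(k+1)-\rank\mathbf{M}$, read off $\rank\mathbf{M}=1+\rank M$ from the block form, and use $\langle X,X\rangle=1$ for (iv); your direct self-orthogonality check in (iii) is equivalent to the paper's condition $\langle F_1,\ldots,F_k\rangle_{\Fqm}\subseteq\mathcal{C}^\perp$, since $\langle F_j,X\rangle=0$ automatically. One small correction to your closing comment: identifying $\Hull(\mathcal{C})$ with $\mathbf{F}(\ker\mathbf{M})$ uses only bilinearity and the fact that the generators span $\mathcal{C}$, not nondegeneracy of the pairing.
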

 
\begin{proof}
Since $\mathcal{C}$ is $\Fqm$-linear and the Delsarte pairing $\langle f,g\rangle=\sum_\ell f_\ell g_\ell$ is $\Fqm$-bilinear, a general element $H=\alpha_0 X+\sum_{j=1}^k\alpha_j F_j\in\mathcal{C}$ belongs to $\mathcal{C}^\perp$ if and only if $\langle H,G\rangle=0$ for every $G\in\mathcal{C}$.  By $\Fqm$-bilinearity this is equivalent to $\langle H,F_i\rangle=0$ for each generator $F_i\in\{X,F_1,\ldots,F_k\}$, which is the linear system $\mathbf{M}\boldsymbol{\alpha}^\top=\mathbf{0}$,
where $\boldsymbol{\alpha}=(\alpha_0,\alpha_1,\ldots,\alpha_k)$.

By~\eqref{eq:Mblock}, the block structure of $\mathbf{M}$ yields $\langle H,X\rangle=\alpha_0$ and $\langle H,F_j\rangle=\sum_{j'=1}^k M_{jj'}\alpha_{j'}$ for $j\ge 1$.
Hence $\mathbf{M}\boldsymbol{\alpha}^\top=\mathbf{0}$ forces $\alpha_0=0$ and
$M(\alpha_1,\ldots,\alpha_k)^\top=\mathbf{0}$.
Since $X,F_1,\ldots,F_k$ are $\Fqm$-linearly independent, the map $\boldsymbol{\alpha}\mapsto H$ is an isomorphism $\Fqm^{k+1}\to\mathcal{C}$,
so
\begin{align*}
  \dim_{\Fqm}\Hull(\mathcal{C})
  \;&=\;\dim_{\Fqm}\ker\mathbf{M}
  \;=\;(k+1)-\rank(\mathbf{M})\\
  \;&=\;(k+1)-(1+\rank(M))
  \;=\;k-\rank(M),
\end{align*}
giving~(i).  Part~(ii) follows: $\Hull(\mathcal{C})=\{0\}$ iff $\rank(M)=k$ iff
$\det(M)\ne 0$.

For part~(iii), an element $H=\sum_{j=1}^k \alpha_j F_j\in\langle F_1,\ldots,F_k\rangle_{\Fqm}$
(so $\alpha_0=0$) satisfies $\langle H,F_j\rangle=0$ for all $j$ iff $M\boldsymbol{\alpha}'=0$,
and $\langle H,X\rangle=\alpha_0=0$ automatically.  Hence
$\langle F_1,\ldots,F_k\rangle_{\Fqm}\subseteq\mathcal{C}^\perp$ iff $M=0$, proving~(iii).

For part~(iv), we note that $\langle X,X\rangle=\sum_\ell(X)_\ell^2=1\ne 0$, so $X\notin\mathcal{C}^\perp$,
which means no element of $\mathcal{C}$ with $\alpha_0\ne 0$ lies in $\mathcal{C}^\perp$.
In particular $\mathcal{C}\not\subseteq\mathcal{C}^\perp$, so $\mathcal{C}$ is neither
self-orthogonal nor self-dual.
\end{proof}

Theorem~\textup{\ref{thm:rdgeneral}} subsumes and completes the case analyses for small $k$.


\begin{corollary} 
\label{cor:rdk1}
If $\mathcal{C}=\langle X,f\rangle_{\Fqm}$ with $f=\sum_{i=1}^{m-1}f_i X^{q^i}$,
setting $\sigma_f=\sum_{i=1}^{m-1}f_i^2$,
\[
  \Hull(\mathcal{C}) \;=\;
  \begin{cases}
    \langle f\rangle_{\Fqm} & \text{if } \sigma_f = 0,\\
    \{0\} & \text{if } \sigma_f \neq 0.
  \end{cases}
\]
If $\mathcal{C}=\langle X,f,g\rangle_{\Fqm}$ with $f=\sum_{i=1}^{m-1}f_i X^{q^i}$
and $g=\sum_{i=2}^{m-1}g_i X^{q^i}$,  setting
\[
  a \;=\; \sum_{i=1}^{m-1}f_i^2, \qquad
  b \;=\; \sum_{i=1}^{m-1}f_i g_i, \qquad
  c \;=\; \sum_{i=2}^{m-1}g_i^2,
\]
then, according to the rank of $M=\bigl(\begin{smallmatrix}a&b\\b&c\end{smallmatrix}\bigr)$:
\begin{itemize}[leftmargin=2em]
  \item if $ac - b^2 \neq 0$, then $\mathcal{C} \cap \mathcal{C}^{\perp} = \{\mathbf{0}\}$;
  \item if $ac - b^2 = 0$ and $a \neq 0$ or $b\neq 0$,
        then $\mathcal{C} \cap \mathcal{C}^{\perp} = \langle{-b f + a g}\rangle_{\Fqm}$;
  \item if $a = b = c = 0$,
        then $\mathcal{C} \cap \mathcal{C}^{\perp} = \langle f, g \rangle_{\Fqm}$.
\end{itemize}
\end{corollary}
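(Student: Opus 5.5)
The plan is to specialize Theorem~\ref{thm:rdgeneral} to $k=1$ and $k=2$, compute the matrix $M$ explicitly, and then read off $\ker M$ rather than only its dimension. The proof of Theorem~\ref{thm:rdgeneral} identifies more than $\dim\Hull(\mathcal{C})$. It shows that $\Hull(\mathcal{C})$ is exactly the set of $H=\sum_{j\ge1}\alpha_jF_j$ with $\alpha_0=0$ and $M(\alpha_1,\ldots,\alpha_k)^\top=\mathbf{0}$. So the hull is the image of $\ker M$ under $\boldsymbol{\alpha}'\mapsto\sum_j\alpha_jF_j$. The hypotheses of Theorem~\ref{thm:rdgeneral} are zero constant term and linear independence of $X,f$ (respectively $X,f,g$). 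I will assume them, i.e.\ $f\neq0$ and $f,g$ independent, as implicit in the corollary.

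For $k=1$ the matrix is $M=(\sigma_f)$ with $\sigma_f=\sum_{\ell=1}^{m-1}f_\ell^2$. If $\sigma_f\neq0$, then $\ker M=\{0\}$ and the hull is trivial. If $\sigma_f=0$, then $\ker M=\Fqm$ and the hull is $\langle f\rangle_{\Fqm}$.

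For $k=2$, first observe that $g$ has no $X^{q}$ term. Hence $M_{12}=\sum_{\ell\ge1}f_\ell g_\ell=b$, $M_{22}=\sum_{\ell\ge2}g_\ell^2=c$, and $M=\bigl(\begin{smallmatrix}a&b\\b&c\end{smallmatrix}\bigr)$. The three cases are then elementary linear algebra.
\begin{itemize}
\item If $\det M=ac-b^2\neq0$, then $\ker M=0$ and the hull is trivial, by part (ii) of the theorem.
\item If $\det M=0$ and $M\neq0$, then $\rank M=1$. The hull is one-dimensional by (i), so it suffices to exhibit a nonzero kernel vector.
\item If $a=b=c=0$, then $M=0$ and the hull is $\langle f,g\rangle_{\Fqm}$.
\end{itemize}

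The only step needing care is the middle case, because the stated generator $-bf+ag$ must be a nonzero kernel vector. We have $M(-b,a)^\top=(-ab+ab,\,-b^2+ac)^\top=(0,0)^\top$ using $ac=b^2$. This vector is nonzero precisely when $a\neq0$ or $b\neq0$. The condition ``$a\ne0$ or $b\ne0$'' misses the subcase $a=b=0$, $c\neq0$, where $\det M=0$ and $M\ne0$ but $(-b,a)=(0,0)$. There the kernel is spanned by $(1,0)$, and the hull is $\langle f\rangle_{\Fqm}$. Since $ac=b^2$ with $a=0$ forces $b=0$, this is the only missed configuration. I will prove the corollary's middle bullet as stated, under $a\neq0$ or $b\neq0$, and point out that $a=b=0\neq c$ yields $\langle f\rangle_{\Fqm}$ rather than any of the three listed cases.

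Finally, $-bf+ag\neq0$ as a polynomial follows from the linear independence of $f$ and $g$. This completes the identification in each case.
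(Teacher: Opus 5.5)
Your proposal is correct and follows the paper's own argument: specialize Theorem~\ref{thm:rdgeneral} to $k=1,2$, identify $M$, and read the hull off $\ker M$, using $(-b,a)^\top$ as the kernel vector in the rank-one case. Your observation that the listed cases omit $a=b=0\neq c$ is also right; there the hull is $\langle f\rangle_{\Fqm}$. The paper's proof mentions this subcase only in a parenthesis, via the kernel vector $(-c,b)^\top$, while still naming $-bf+ag$ as the generator, although that element vanishes there.
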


\begin{proof}
The hull dimension formula in Theorem~\textup{\ref{thm:rdgeneral}}(i) gives $\dim_{\Fqm}\Hull(\mathcal{C}) = 2 - \rank(M)$ in the two-generator case.
The three cases correspond to $\rank(M)=2$, $1$, $0$ respectively.
When $\rank(M)=1$, the kernel of $M$ acting on $(\alpha_1,\alpha_2)^\top$ is spanned
by $(-b,a)^\top$ (or $(-c,b)^\top$ if $a=b=0$), giving the hull generator
$-bf+ag \in \mathcal{C}$.
\end{proof}

\begin{remark}
For codes not containing $X$, the condition $a=b=c=0$ means that
$\langle f,g\rangle_{\Fqm}$ is self-orthogonal with respect to the Delsarte pairing. To conclude self-duality one must in addition verify the ambient dimension condition appropriate to the chosen bilinear space.
\end{remark}

\begin{remark}
\label{rem:nondegen}
The condition $X\notin\mathcal C$ is not equivalent to $\mathcal C$ being degenerate. Indeed, the property $X\in\mathcal C$ is not invariant under code equivalence, whereas degeneracy is. In fact, $\mathcal C$ is degenerate if and only if it has a nontrivial common kernel, namely
\[
\dim_{\F_q}\Bigl(\bigcap_{h\in\mathcal C}\ker(h)\Bigr)>0.
\]
\end{remark}

\section{The two-parameter code family \texorpdfstring{$C_{\lambda,\mu}$}{C(lambda,mu)}}
\label{sec:twoparam}


In this section we investigate the family $\Clm$ from the rank-distance perspective. More precisely, starting from the $\mathbb{F}_q$-rank-metric code $\langle F_0,F_1 \rangle_{\mathbb{F}_q}$, where 
$F_0=X$, $F_1=L$ for a fixed $q$-polynomial $L$, we view $\Clm$ as the image/support code naturally attached to the rank-metric object generated by the operator $\phiop(x)=\lambda x+\mu L(x)$. Thus, for any $\lambda,\mu \in \mathbb{F}_q$, we consider
\[
  \Clm \;=\; \mathcal{C}(\lambda,\mu) \;=\; \im(\phiop) \;\subseteq\;\Fqm.
\]
The three structure matrices $\Gamma_{00},\Gamma_{01},\Gamma_{11}$ of the general
theory specialize to
\begin{align}
  (G_0)_{ij} &\;=\; \Tr_{\Fqm/\Fq}(e_i\,e_j) \;=\; (\Gamma_{00})_{ij}, \label{eq:G0}\\
  (G_1)_{ij} &\;=\; \Tr_{\Fqm/\Fq}(e_i\,L(e_j)) + \Tr_{\Fqm/\Fq}(L(e_i)\,e_j)
               \;=\; (\Gamma_{01}+\Gamma_{10})_{ij}, \label{eq:G1}\\
  (G_2)_{ij} &\;=\; \Tr_{\Fqm/\Fq}(L(e_i)\,L(e_j)) \;=\; (\Gamma_{11})_{ij}, \label{eq:G2}
\end{align}
and the quadratic decomposition~\eqref{eq:gramdecomp} becomes
\begin{equation}
  \label{eq:Glamu}
  G_{\lambda,\mu} \;\coloneqq\; G(\lambda,\mu)
  \;=\; \lambda^2 G_0 \;+\; \lambda\mu\,G_1 \;+\; \mu^2 G_2,
  \qquad (\lambda,\mu)\in\Fq^2.
\end{equation}
All three matrices are symmetric over $\Fq$ by the symmetry of the trace form. {We also have $$ G_{\lambda,\mu}=(\lambda I + \mu A_L)^{\top}G_0 (\lambda I + \mu A_L),$$ where $A_L$ is the matrix associated to the $\F_q$-linear operator $L$ with respect to the $\F_q$-basis $\{ e_1,\ldots,e_m\}$. Then $\det(G_{\lambda,\mu})=\det(G_0) \det(\lambda I + \mu A_L) ^2$.}
Theorem~\textup{\ref{thm:master}} gives immediately
\[
  \dim_{\Fq}\Hull(\Clm) \;=\; \rank(\phiop) \;-\; \rank(G_{\lambda,\mu}).
\]
{In particular, when $\phiop$ is bijective  then $C_{\lambda,\mu}$ is LCD.}
The hull strata $\mathcal{S}_h$ of Definition~\textup{\ref{def:strata}} restrict, for
$(\lambda,\mu)\in\Fq^2\setminus\{(0,0)\}$, to subsets of $\PP^1(\Fq)$
(since $\Fq^*$-scaling on $\Fq^2$ gives the usual projective line over $\Fq$), which
we continue to denote $\mathcal{S}_h$ in this context.  For $(\lambda,\mu)\in\Fqm^2$, the hull
strata on $\PP^1(\Fqm)$ are described by the adjoint formulation of
Section~\textup{\ref{sec:frobenius}}.

Since $G_{\lambda,\mu}$ is homogeneous of degree two in $(\lambda,\mu)$, the hull
dimension (for $(\lambda,\mu)\in\Fq^2$) is a function of the projective ratio
$\rho=\lambda/\mu\in\Fq$.
Setting $\mu=1$ without loss of generality, the analysis reduces to the
\emph{monic pencil}
\begin{equation}
  \label{eq:pencil}
  \widetilde{G}(\rho) \;\coloneqq\; \rho^2 G_0 \;+\; \rho\,G_1 \;+\; G_2, \qquad \rho\in\Fq,
\end{equation}
with the degenerate fibre at $\rho=\infty$ (i.e.\ $\mu=0$) giving
$G_{\lambda,0}=\lambda^2 G_0$ of rank $\rank(G_0)$. {In the case $\mu=0$, $C_{\lambda,0}$ is LCD and hence we can assume $\mu \ne 0$ and consider the map $\phi_{\lambda / \mu, 1}$. Then $$\det(G_{\lambda / \mu,1})=\det(G_0)\det(\lambda / \mu I + A_L)^2,$$ where the solutions of the  equation $\det(\gamma I + A_L)=0$ are exactly the eigenvalues of the $\F_q$-linear operator $\tilde{L}(x)=-L(x)$.}

\begin{definition} 
\label{def:discriminant}
The \emph{discriminant} of the pencil is
\begin{equation}
  \label{eq:discriminant}
  \Delta(\rho) \;\coloneqq\; \det\!\bigl(\widetilde{G}(\rho)\bigr)
  \;=\; \det(\rho^2 G_0 + \rho G_1 + G_2) \;\in\; \Fq[\rho],
\end{equation}
a polynomial of degree at most $2m$, with leading coefficient $\det(G_0)$ when
$G_0$ is invertible.  {Here ``discriminant'' means the determinant of the quadratic Gram pencil.  Since
$G_{\rho,1}=(\rho I+A_L)^{\top}G_0(\rho I+A_L)$ and $G_0$ is nonsingular,
}\[
  {\Delta(\rho)=\det(G_0)\det(\rho I+A_L)^2.
}\]
{Consequently, for $\rho\in\Fq$, $\Delta(\rho)=0$ if and only if the defining operator $\phi_{\rho,1}$ is singular.  Thus the discriminant detects singular fibres; on such fibres the LCD property must still be decided from $\rank(\phi_{\rho,1})-\rank(G_{\rho,1})$.
}\end{definition}

\begin{proposition}
\label{prop:lcdiff}
{Let $(\lambda:\mu)\in\PP^1(\Fq)$.  If $\mu=0$, then $\phi_{\lambda,0}=\lambda I$ }is bijective and {$C_{\lambda,0}$ is LCD.  If $\mu\ne0$ and $\rho=\lambda/\mu$, then
}\[
{\Delta(\rho)=0 }\iff {\phi_{\rho,1}\text{ is singular}.
}\]
{Hence every bijective point is LCD, whereas at a root of $\Delta$ one has
}\[
{\dim_{\Fq}}\Hull{(C_{\rho,1})=\dim_{\Fq}\ker G_{\rho,1}-\dim_{\Fq}\ker\phi_{\rho,1}.
}\]
{In particular, the distinct roots of $\Delta$ form the singular locus of the base-field pencil and number at most $m$.}
\end{proposition}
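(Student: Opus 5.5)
The plan is to split according to whether $\mu=0$, and to rely on the matrix factorization $G_{\lambda,\mu}=(\lambda I+\mu A_L)^{\top}G_0(\lambda I+\mu A_L)$ together with the master hull--rank theorem (Theorem~\ref{thm:master}).

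\emph{The fibre $\mu=0$.} Here $\lambda\ne0$, because $(\lambda:\mu)$ is a projective point. Then $\phi_{\lambda,0}=\lambda I$ is bijective, so $C_{\lambda,0}=\Fqm$. The Remark preceding Theorem~\ref{thm:master} (non-degeneracy of the trace form) then shows that $C_{\lambda,0}$ is LCD.

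\emph{The fibres $\mu\ne0$.} By homogeneity we may take $\mu=1$ and $\rho=\lambda/\mu$. The identity $\Delta(\rho)=\det(G_0)\det(\rho I+A_L)^2$ from Definition~\ref{def:discriminant} gives the equivalence directly. Since $G_0$ is invertible, $\det(G_0)\neq0$ in the field $\Fq$, so $\Delta(\rho)=0$ if and only if $\det(\rho I+A_L)=0$. This holds exactly when the matrix of $\phi_{\rho,1}$, namely $\rho I+A_L$, is singular, i.e.\ when $\phi_{\rho,1}$ is not bijective. So every point with $\Delta(\rho)\ne0$ is bijective, and hence LCD by Theorem~\ref{thm:master}.

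\emph{The hull formula at a root.} Theorem~\ref{thm:master} gives $\dim\Hull=\rank\phi_{\rho,1}-\rank G_{\rho,1}$. Rank--nullity on the $m$-dimensional domain gives $\rank\phi_{\rho,1}=m-\dim\ker\phi_{\rho,1}$ and $\rank G_{\rho,1}=m-\dim\ker G_{\rho,1}$. Subtracting yields the stated kernel-difference formula. This formula is valid for every $\rho$ and simply specializes at roots of $\Delta$.

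\emph{Counting the roots.} The distinct roots $\rho\in\Fq$ of $\Delta$ are exactly the roots of $\det(\rho I+A_L)$. This is a monic polynomial of degree $m$ in $\rho$; its roots are the negatives of the eigenvalues of $A_L$ in $\Fq$. A degree-$m$ polynomial has at most $m$ roots, which gives the bound. (By contrast, $\Delta$ itself has degree $2m$.) Together with the $\mu=0$ case, this shows the roots are precisely the singular locus of the base-field pencil.

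The main point needing care is the factorization $G_{\rho,1}=(\rho I+A_L)^{\top}G_0(\rho I+A_L)$. It was established in Section~\ref{sec:background} as $G_T=A^\top G_0A$ for $T=\phi_{\rho,1}$, and it relies on $\rho\in\Fq$, so that $\phi_{\rho,1}$ has matrix $\rho I+A_L$ over $\Fq$. Once this is in place, the rest is bookkeeping.
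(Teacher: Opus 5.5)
Your proposal is correct and follows the paper's proof essentially step for step. Both use the factorization $\Delta(\rho)=\det(G_0)\det(\rho I+A_L)^2$ with $G_0$ nonsingular, Theorem~\ref{thm:master} rewritten via rank--nullity, and the root bound from the degree-$m$ polynomial $\det(\rho I+A_L)$; your explicit treatment of the $\mu=0$ fibre, which the paper's proof leaves implicit, is a harmless addition.
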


\begin{proof}
{The factorization $G_{\rho,1}=(\rho I+A_L)^{\top}G_0(\rho I+A_L)$ gives
$\Delta(\rho)=\det(G_0)\det(\rho I+A_L)^2$.  Since $G_0$ is nonsingular, the determinant vanishes exactly when $\rho I+A_L$ is singular, equivalently when $\phi_{\rho,1}$ is singular.  The hull formula follows from Theorem~}\textup{\ref{thm:master}} {after writing ranks as $m$ minus nullities.  Although $\Delta$ has degree $2m$, its distinct roots are those of the degree-$m$ polynomial $\det(\rho I+A_L)$, so there are at most $m$ of them}.
\end{proof}

When $G_0$ is invertible, setting $A=G_0^{-1}G_1$ and $B=G_0^{-1}G_2$ gives
\begin{equation}
  \label{eq:factored}
  \Delta(\rho) \;=\; \det(G_0)\cdot\det(\rho^2 I_m + \rho A + B),
\end{equation}
whose roots are the generalized eigenvalues of the companion linearization
\begin{equation}
  \label{eq:companion}
  \mathcal{L}(\rho) \;\coloneqq\;
  \rho\begin{pmatrix}I&0\\0&I\end{pmatrix}
  -\begin{pmatrix}0&I\\-B&-A\end{pmatrix}
  \;\in\;\mathcal{M}_{2m\times 2m}(\Fq[\rho]).
\end{equation}
The Kronecker structure of $\mathcal{L}$ encodes the complete rank behavior of the pencil.

The pencil simplifies substantially when $\ochar(\Fq)=2$.

\begin{proposition}
\label{prop:char2vanish}
If $\ochar(\Fq)=2$ and $L$ is self-adjoint \textup{(}i.e.\ $L^\dagger=L$\textup{)}, then $G_1=0$, so
$G_{\lambda,\mu}=\lambda^2 G_0+\mu^2 G_2$.
\end{proposition}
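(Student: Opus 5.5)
The argument works entry by entry on $G_1$, rewritten through the adjoint. By definition~\eqref{eq:G1},
\[
(G_1)_{ij}=\Tr(e_i\,L(e_j))+\Tr(L(e_i)\,e_j).
\]
The trace form is symmetric, so $\Tr(L(e_i)e_j)=\langle L(e_i),e_j\rangle$. By the defining property of the adjoint from Section~\ref{sec:background}, this equals $\langle e_i,L^\dagger(e_j)\rangle=\Tr(e_i\,L^\dagger(e_j))$. Hence
\[
(G_1)_{ij}=\Tr\bigl(e_i\,(L+L^\dagger)(e_j)\bigr).
\]
This gives the useful identity $G_1=G_0\,A_{L+L^\dagger}$, where $A_{L+L^\dagger}$ is the matrix of $L+L^\dagger$ in the basis $\{e_j\}$, using the same bookkeeping as the factorization $G_T=A^\top G_0A$.

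Now impose the hypothesis $L^\dagger=L$. Then $L+L^\dagger=2L$. In characteristic $2$ the operator $2L$ is identically zero, since $2=0$ in $\Fq$ and hence in $\Fqm$. So every entry $(G_1)_{ij}=\Tr(e_i\cdot 0)=0$, that is, $G_1=0$. Equivalently, self-adjointness gives $\Tr(e_iL(e_j))=\Tr(L(e_i)e_j)$, so each entry of $G_1$ is twice the same element of $\Fq$, which is $0$ in characteristic $2$.

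Substituting $G_1=0$ into the quadratic decomposition~\eqref{eq:Glamu} then gives $G_{\lambda,\mu}=\lambda^2G_0+\mu^2G_2$ for all $(\lambda,\mu)\in\Fq^2$.

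There is no real obstacle. The only point needing care is that the adjoint is taken with respect to the same trace form used in the Gram matrices, so that the equality $\Tr(L(e_i)e_j)=\Tr(e_iL^\dagger(e_j))$ applies directly. One might also note that self-adjointness is really only needed in the weak form $\Tr(e_iL(e_j))=\Tr(L(e_i)e_j)$ for all $i,j$, which is equivalent to $L^\dagger=L$ by non-degeneracy of the trace form.
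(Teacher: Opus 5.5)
Your proof is correct and is essentially the paper's argument. Both use the adjoint identity with respect to the trace form to show that the two summands of $(G_1)_{ij}$ coincide, so each entry equals $2\Tr(L(e_i)e_j)=0$ in characteristic $2$. Your repackaging as $G_1=G_0A_{L+L^\dagger}$, and the remark that entrywise symmetry is equivalent to $L^\dagger=L$ by non-degeneracy, are valid extras that the proof does not need.
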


\begin{proof}
From~\eqref{eq:G1}, $(G_1)_{ij}=\Tr(e_i\,L(e_j))+\Tr(L(e_i)\,e_j)$.
Since $L$ is self-adjoint, $\Tr(e_i\,L(e_j))=\langle e_i,L(e_j)\rangle
=\langle L^\dagger(e_i),e_j\rangle=\langle L(e_i),e_j\rangle=\Tr(L(e_i)\,e_j)$,
so $(G_1)_{ij}=2\,\Tr(L(e_i)\,e_j)=0$ in characteristic~$2$.
\end{proof}


\begin{corollary}
\label{cor:char2hull}
Suppose $\ochar(\Fq)=2$ and $L$ is self-adjoint.  Then $\dim_{\Fq}\Hull(\Clm)$ (for $(\lambda,\mu)\in\Fq^2$) depends on
$(\lambda,\mu)$ only through $(\lambda^2,\mu^2)$, the discriminant satisfies
$\Delta(\rho)=\det(\rho^2 G_0+G_2)\in\Fq[\rho^2]$, and the hull dimension is
constant on cosets of $(\Fq^*)^2\times(\Fq^*)^2$ in $\Fq^*\times\Fq^*$.
\end{corollary}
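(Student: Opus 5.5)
The plan is to obtain all three claims from Proposition~\ref{prop:char2vanish} and the master hull--rank theorem (Theorem~\ref{thm:master}), exploiting that in characteristic $2$ squaring $x\mapsto x^2$ is an automorphism of $\Fq$. First, Proposition~\ref{prop:char2vanish} gives $G_{\lambda,\mu}=\lambda^2G_0+\mu^2G_2$. So the Gram matrix, and hence $\rank(G_{\lambda,\mu})$, depends only on $(\lambda^2,\mu^2)$. Specialising to $\lambda=\rho$, $\mu=1$ shows $\widetilde G(\rho)=\rho^2G_0+G_2$. Taking determinants yields $\Delta(\rho)=\det(\rho^2G_0+G_2)$, a polynomial in the entries of $\rho^2G_0+G_2$ and therefore an element of $\Fq[\rho^2]$.

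For the hull dimension I will use $\dim\Hull(\Clm)=\rank(\phiop)-\rank(G_{\lambda,\mu})$. The second term is already handled, so the remaining point is that $\rank(\phiop)$ is also determined by $(\lambda^2,\mu^2)$. This is immediate in characteristic $2$: the map $x\mapsto x^2$ on $\Fq$ is a bijection, so $(\lambda^2,\mu^2)$ determines $(\lambda,\mu)$ uniquely. Thus any quantity that is a function of $(\lambda,\mu)$ is trivially a function of $(\lambda^2,\mu^2)$.

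The same observation settles the coset statement. Since $q$ is even, $\Fq^*$ has odd order $q-1$, so squaring is a bijection of $\Fq^*$ and $(\Fq^*)^2=\Fq^*$. Consequently $(\Fq^*)^2\times(\Fq^*)^2=\Fq^*\times\Fq^*$, which has a single coset, and constancy on cosets reduces to a statement about that one orbit. I would make it substantive by combining it with the projective invariance noted after Theorem~\ref{thm:master}. For $t\in\Fq^*$ we have $G(t\lambda,t\mu)=t^2G(\lambda,\mu)$ and $\phi_{t\lambda,t\mu}=t\,\phiop$, so both ranks, and hence the hull dimension, are constant along $\Fq^*$-scaling.

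I expect no real obstacle. The main care point is presenting the statement honestly: in characteristic $2$ the dependence on squares and the coset claim are consequences of the Frobenius being bijective on $\Fq$, rather than deep restrictions. I would write it that way, so the reader sees that the genuinely nontrivial content is $\Delta\in\Fq[\rho^2]$, i.e.\ the vanishing of the odd-degree coefficients of $\Delta$.
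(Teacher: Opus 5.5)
Your handling of the first two assertions is correct and follows the paper's argument: Proposition~\ref{prop:char2vanish} kills $G_1$, and then you read off $G_{\lambda,\mu}$ and $\Delta$. It is actually more complete than the paper's proof. The paper only notes that $G_{\lambda,\mu}$ depends on $(\lambda^2,\mu^2)$ and says nothing about the term $\rank(\phiop)$ in the hull formula. Your observation that squaring is a bijection of $\Fq$ covers both terms at once. Incidentally, $\Delta\in\Fq[\rho^2]$ is no deeper. Since $\Delta(\rho)=\det(G_0)\det(\rho I+A_L)^2$, and squares of polynomials lie in $\Fq[\rho^2]$ in characteristic $2$, it holds even without self-adjointness.

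The gap is the coset assertion. You correctly observe that $(\Fq^*)^2=\Fq^*$, so $(\Fq^*)^2\times(\Fq^*)^2$ is all of $\Fq^*\times\Fq^*$ and there is only one coset. Read literally, the claim then says that $\dim\Hull(\Clm)$ is the same at every $(\lambda,\mu)$ with $\lambda\mu\ne0$. Your replacement argument does not give this. The identities $\phi_{t\lambda,t\mu}=t\,\phiop$ and $G(t\lambda,t\mu)=t^2G(\lambda,\mu)$ show invariance under the diagonal subgroup $\{(t,t)\}$. That means the hull dimension is a function of $\rho=\lambda/\mu$, not that it is independent of $\rho$.

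It is in fact not independent of $\rho$. Take $q=4$, $m=2$, $L=X^4$, which is self-adjoint since $m\mid 2k$ with $k=1$.
At $(1,1)$ the operator is $\Tr_{\F_{16}/\F_4}$, whose image is $\F_4$. For $a,b\in\F_4$ we have $\Tr_{\F_{16}/\F_4}(ab)=2ab=0$, so the hull has dimension $1$.
At $(\omega,1)$ with $\omega\in\F_4\setminus\F_2$, a nonzero kernel element would satisfy $x^3=\omega$. This is impossible: the cubes in $\F_{16}^*$ form the subgroup of order $5$, while $\omega$ has order $3$. So $\phi_{\omega,1}$ is bijective and the code is LCD.
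Both points lie in the single coset $\F_4^*\times\F_4^*$, so the third assertion cannot be proved as written.

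Your closing remark therefore has it backwards. Bijectivity of squaring makes the fibres of $(\lambda,\mu)\mapsto(\lambda^2,\mu^2)$ trivial; these fibres are the cosets of the kernel $\{\pm1\}\times\{\pm1\}$. But it makes the cosets of the image as large as possible. The only true reading is constancy on those fibres, which is vacuous in characteristic $2$. It is also the only reading under which the paper's ``the coset claim is immediate'' is justified. You should say explicitly that the literal claim fails and prove the fibre version, rather than silently substituting diagonal-scaling invariance.
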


\begin{proof}
Since $G_1=0$ by Proposition~\ref{prop:char2vanish}, the pencil collapses to
$G_{\lambda,\mu}=\lambda^2 G_0+\mu^2 G_2$,
which depends on $(\lambda,\mu)$ only through $(\lambda^2,\mu^2)$.  Setting $G_1=0$
in~\eqref{eq:discriminant} yields $\Delta(\rho)=\det(\rho^2 G_0+G_2)$, manifestly
a polynomial in $\rho^2$.  The coset claim is immediate.
\end{proof}

We illustrate the structure with the smallest non-trivial instance.

\begin{remark}
The operator $L(x)=x^{q^k}$ is self-adjoint over $\Fqm$ if and only if $m\mid 2k$,
since $L^\dagger(x)=x^{q^{m-k}}$ and $L^\dagger=L$ iff $q^{m-k}\equiv q^k\pmod{q^m-1}$,
i.e.\ $m\mid 2k$.  In particular, $L(x)=x^2=x^{q^1}$ is self-adjoint over $\mathbb{F}_4$
\textup{(}$m=2$, $k=1$, $m\mid 2k$\textup{)}, so Proposition~\textup{\ref{prop:char2vanish}}
applies to the example below.
\end{remark}
\begin{example}
\label{ex:F4}
Let $q=2$, $m=2$, and $L(x)=x^2$.  Write $\mathbb{F}_4=\mathbb{F}_2(\alpha)$ with
$\alpha^2+\alpha+1=0$, and take $\{e_1,e_2\}=\{1,\alpha\}$.  Since
$\Tr_{\mathbb{F}_4/\mathbb{F}_2}(x)=x+x^2$, direct computation gives
$\Tr(1)=0$, $\Tr(\alpha)=1$, $\Tr(\alpha^2)=1$, so
\[
  G_0 \;=\; \begin{pmatrix}0&1\\1&1\end{pmatrix}.
\]
By Proposition~\textup{\ref{prop:char2vanish}}, $G_1=0$.
Since $L(1)=1$ and $L(\alpha)=\alpha^2=\alpha+1$, the same trace evaluations give
$G_2=G_0$.  Hence $G_{\lambda,\mu}=(\lambda^2+\mu^2)G_0=(\lambda+\mu)^2G_0$ in
characteristic~$2$.  As $\det(G_0)=1\ne 0$ over $\mathbb{F}_2$,
\[
  \rank(G_{\lambda,\mu}) \;=\;
  \begin{cases} 2 & \text{if }\lambda\neq\mu,\\ 0 & \text{if }\lambda=\mu.\end{cases}
\]
If $\lambda\neq\mu$, then $\phi_{\lambda,\mu}$ is bijective, so
$\dim_{\mathbb{F}_2}\Hull(C_{\lambda,\mu})=2-2=0$ and the code is LCD.
If $\lambda=\mu\neq 0$, then
$\phi_{\lambda,\lambda}(x)=\lambda(x+x^2)$ has kernel $\mathbb{F}_2$ and hence rank~$1$;
since $G_{\lambda,\lambda}=0$, Theorem~\textup{\ref{thm:master}} gives
$\dim_{\mathbb{F}_2}\Hull(C_{\lambda,\lambda})=1$.
Thus $C_{\lambda,\lambda}$ is self-orthogonal of dimension~$1$, not the whole ambient space.
The discriminant is $\Delta(\rho)=\det((\rho^2+1)G_0)=(\rho+1)^4$ over $\mathbb{F}_2$,
with unique root $\rho=1$, exactly the non-bijective and self-orthogonal projective point.
\end{example}

The preceding results combine into the following theorem, which collects the complete
hull analysis of the two-parameter family.

\begin{theorem}
\label{thm:grampencil}
Let $L$ be a $q$-polynomial over $\Fqm$, $\{e_1,\ldots,e_m\}$ a $\Fq$-basis of $\Fqm$,
$G_0,G_1,G_2$ as in~\eqref{eq:G0}--\eqref{eq:G2}, and
$\Delta(\rho)=\det(\rho^2G_0+\rho G_1+G_2)\in\Fq[\rho]$.  Then:
\begin{enumerate}[label=\textup{(\roman*)}]
  \item For $(\lambda,\mu)\in\Fq^2\setminus\{(0,0)\}$,
$\dim_{\Fq}\Hull(\Clm)=\rank(\phi_{\lambda,\mu})-\rank(\lambda^2 G_0+\lambda\mu G_1+\mu^2 G_2)$.
\item {For $\mu\ne0$ and $\rho=\lambda/\mu\in\Fq$, $\Delta(\rho)=0$ }if and only if {$\phi_{\rho,1}$ is singular.  Thus every bijective base-field point is LCD, while a singular point is LCD exactly when $\rank(G_{\rho,1})=\rank(\phi_{\rho,1})$}.  For general $(\lambda:\mu)\in\PP^1(\Fqm)$, {where the quadratic base-field expansion need not apply, }the hull is determined by Theorem~\textup{\ref{thm:hullgeneral}} via the adjoint operator.
\item If $\ochar(\Fq)=2$ and $L$ is self-adjoint, then $G_1=0$ and $\dim_{\Fq}\Hull(\Clm)$ depends
on $(\lambda,\mu)\in\Fq^2$ only through $(\lambda^2,\mu^2)$.
\item {Since $G_0$ is the Gram matrix of the non-degenerate trace form, it is always invertible.  Hence $\deg\Delta=2m$, but the distinct roots of $\Delta$ are those of $\det(\rho I+A_L)$ and therefore number at most $m$.  Counted with multiplicity, the roots of $\Delta$ in $\overline{\Fq}$ are the generalized eigenvalues of~\eqref{eq:companion}.}
\end{enumerate}
\end{theorem}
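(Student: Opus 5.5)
This theorem is essentially a collation of results already established, so the plan is to prove each item by invoking the appropriate earlier statement and supplying the few missing links.

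For (i), we apply Theorem~\ref{thm:master} with $k=1$, $F_0=X$, $F_1=L$ and $\boldsymbol{\alpha}=(\lambda,\mu)\in\Fq^2$. Since the parameters lie in the base field, the quadratic decomposition~\eqref{eq:gramdecomp} is valid. Its three structure matrices are identified in~\eqref{eq:G0}--\eqref{eq:G2}; in particular $\lambda\mu(\Gamma_{01}+\Gamma_{10})=\lambda\mu G_1$. This gives the formula in (i) directly.

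For (ii), we first use the factorization
\[
G_{\lambda,\mu}=(\lambda I+\mu A_L)^{\top}G_0(\lambda I+\mu A_L).
\]
It follows from $\phi_{\lambda,\mu}$ having matrix $\lambda I+\mu A_L$ (by $\Fq$-linearity, since $\lambda,\mu\in\Fq$), together with the general identity $G_T=A^{\top}G_0A$ of Section~\ref{sec:background}. Taking determinants at $\mu=1$ gives $\Delta(\rho)=\det(G_0)\det(\rho I+A_L)^2$, which is exactly Proposition~\ref{prop:lcdiff}. Since $G_0$ is nonsingular, $\Delta(\rho)=0$ iff $\phi_{\rho,1}$ is singular. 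On bijective points both ranks equal $m$, so the code is LCD. On singular points, (i) shows that LCD is equivalent to the rank equality. The clause about $\PP^1(\Fqm)$ is only a pointer to the later adjoint theorem, so nothing needs to be proved here beyond noting that~\eqref{eq:gramdecomp} fails for non-base-field parameters.

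For (iii), we cite Proposition~\ref{prop:char2vanish} and Corollary~\ref{cor:char2hull}. One small point needs care: the rank of $\phi_{\lambda,\mu}$ also appears in the hull formula, so we must check that it also depends only on $(\lambda^2,\mu^2)$. In characteristic $2$, squaring is a bijection of $\Fq$, so $(\lambda^2,\mu^2)$ determines $(\lambda,\mu)$, and the claim is immediate.

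For (iv), $G_0$ is invertible because the trace form is non-degenerate. Then $\det(\rho I+A_L)$ is monic of degree $m$, so $\Delta$ has degree exactly $2m$, with leading coefficient $\det G_0\neq0$ and at most $m$ distinct roots. For the generalized-eigenvalue statement, we use the standard block identity $\det(\rho I_{2m}-C)=\det(\rho^2I+\rho A+B)$ for the companion matrix $C$ of~\eqref{eq:companion}. This is obtained by Schur complement on the upper-left block $\rho I$, which is valid as a polynomial identity. Combined with~\eqref{eq:factored}, it shows that the roots of $\Delta$ with multiplicity in $\overline{\Fq}$ are the eigenvalues of $C$.

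The only step requiring real verification is this determinant identity. It is classical, so I expect no genuine obstacle. The main care is in (ii), namely making sure the matrix of $\phi_{\lambda,\mu}$ is $\lambda I+\mu A_L$, which uses $\lambda,\mu\in\Fq$.
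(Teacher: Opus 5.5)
Your proposal is correct and follows essentially the same route as the paper. The paper derives (i)--(iii) from Theorem~\ref{thm:master}, Proposition~\ref{prop:lcdiff} and Corollary~\ref{cor:char2hull}, and gets (iv) from the two factorizations $\Delta(\rho)=\det(G_0)\det(\rho I+A_L)^2=\det(G_0)\det(\rho^2I+\rho A+B)$. Your Schur-complement check of $\det(\rho I_{2m}-C)=\det(\rho^2I+\rho A+B)$, and your remark that $\rank(\phi_{\lambda,\mu})$ depends only on $(\lambda^2,\mu^2)$ because squaring is bijective on $\Fq$ in characteristic~$2$, fill in details the paper leaves implicit.
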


\begin{proof}
Parts~(i), (ii), and~(iii) {follow from }Theorem~\textup{\ref{thm:master}}, Proposition~\ref{prop:lcdiff}, and Corollary~\ref{cor:char2hull}, respectively.
{For~(iv), the two factorizations}
\[
  {\Delta(\rho)=\det(G_0)\det(\rho I+A_L)^2
  =\det(G_0)\det(\rho^2I+\rho A+B)}
\]
{show that $\deg\Delta=2m$, while its distinct roots are those of the degree-$m$ polynomial $\det(\rho I+A_L)$.  The roots of $\det(\rho^2I+\rho A+B)$, counted with multiplicity, are precisely the finite generalized eigenvalues of the companion linearization~\eqref{eq:companion}, proving~(iv).}
\end{proof}

\begin{remark}
Theorem~\textup{\ref{thm:grampencil}} reduces the rank analysis of $\{G_{\lambda,\mu}\}$ to two
tasks: {(a) compute and factor $\Delta(\rho)$ over $\Fq$; and
(b) for each root $\rho_0$ of $\Delta$, compute $\dim\ker\widetilde{G}(\rho_0)$.}
Task~(a) is carried out for $L(x)=x^{q^k}$ in Section~\textup{\ref{sec:frobenius}}.
\end{remark}

\begin{remark}
\label{rem:parallel}
The parallel between the two specializations is now transparent.  In the two-parameter family, for $(\lambda,\mu)\in\Fq^2${, the determinant }$\Delta(\rho)\in\Fq[\rho]$ is a degree-$2m$ {singularity polynomial: its base-field roots are precisely the singular fibres of the operator pencil.  The hull dimension at a singular root is then obtained from the rank difference in Theorem~\textup{\ref{thm:master}}.  Roots computed over }$\overline{\Fq}$ {describe the singular locus after scalar extension, and the base-field roots are obtained by intersection with $\Fq$.  The full parameter }line $\PP^1(\Fqm)$ is handled by the adjoint formula.
In the rank-distance family, the hull is the kernel of the $k\times k$ matrix $M$
over $\Fqm$, whose determinant $\det(M)\in\Fqm$ plays the same role.  Both are
instances of the same Gram-matrix philosophy: the difference lies in which parameter varies
(a projective line versus a fixed generating set) and over which field the Gram
matrix is defined ($\Fq$ versus $\Fqm$).
\end{remark}

Table~\ref{tab:parallel} outlines the key distinctions between $\Clm$ and a generic rank-metric code $\mathcal{C}$.

\begin{table}[ht]
\centering
\renewcommand{\arraystretch}{1.25}
\begin{tabular}{lcc}
\toprule
 & Two-parameter family $\Clm$ & Rank-distance code $\mathcal{C}$ \\
\midrule
Code space      & $\im(\phiop)\subseteq\Fqm$, $\Fq$-linear & $\langle X,F_1,\ldots,F_k\rangle_{\Fqm}$, $\Fqm$-linear \\
Bilinear form   & Trace form $\Tr_{\Fqm/\Fq}$               & Delsarte coefficient pairing \\
Gram matrix     & $G_{\lambda,\mu}=\lambda^2G_0+\lambda\mu G_1+\mu^2G_2$ over $\Fq$ & $M=(M_{jj'})$ over $\Fqm$ \\
Size            & $m\times m$                               & $k\times k$ \\
Hull formula    & $\dim\Hull=\rank(\phiop)-\rank(G_{\lambda,\mu})$ & $\dim\Hull=k-\rank(M)$ \\
LCD criterion   & {$\rank(G_{\lambda,\mu})=\rank(\phiop)$ }& $\det(M)\ne 0$ \\
Self-orthogonal & $G_{\lambda,\mu}=0$                       & $M=0$ \\
Self-dual       & possible                                  & never (Thm.~\ref{thm:rdgeneral}(iv)) \\
\bottomrule
\end{tabular}
\caption{Structural parallel between the two specializations of the general Gram-matrix approach.}
\label{tab:parallel}
\end{table}

\section{The Frobenius twist}
\label{sec:frobenius}

We now carry out the particular case of Theorem~\textup{\ref{thm:grampencil}} for the
canonical choice $L=X^{q^k}$, $1\le k\le m-1$, the \emph{Frobenius twist} by
$q^k$.  

{Let $\phi_{\lambda,\mu}=\lambda I +\mu L$. We can assume $\mu \ne 0$ and consider $\phi_{\lambda / \mu,1}$. Recall that $\det(G_{\lambda / \mu,1})=0$ if and only if $\det(\lambda / \mu I + A_L)=0$, or equivalently $G_{\lambda / \mu,1}$ is singular if and only if $\lambda / \mu$ is an eigenvalue of $\tilde{L}(x)=-x^{q^k}$.}

This choice yields explicit closed-form expressions for $G_0$, $G_1$, $G_2$,
for the discriminant $\Delta(\rho)$ and its roots, and for the hull dimension at every
point of $\PP^1(\Fqm)$.  A secondary motivation for this section is to determine the
\emph{spectrum} of the family $\Clm$, namely the collection $\{N_\delta\}_\delta$ where
\[
  N_\delta \;\coloneqq\; \#\bigl\{(\lambda,\mu)\in\Fq^2\setminus\{(0,0)\}
  \;:\; \dim\Hull(C_{\lambda,\mu})=\delta\bigr\}.
\]
Equivalently, after projectivizing, one may count the points of $\PP^1(\Fq)$ (or of
$\PP^1(\Fqm)$ when working over the larger field of parameters) according to hull
dimension.  In the Frobenius-twist case this spectrum can be described completely, so
this section may also be read as a first step toward the broader problem of whether
such hull spectra distinguish families, detect equivalences, or encode finer structural
information about the codes $C_{\lambda,\mu}$.  Throughout this section we write
$d\coloneqq\gcd(k,m)$ and work with the \emph{normal basis}
$\{e_i=\beta^{q^{i-1}}\}_{i=1}^{m}$ for a fixed normal element $\beta\in\Fqm$
(which exists by the Normal Basis Theorem).

Since $\Tr_{\Fqm/\Fq}(\beta^{q^i}\beta^{q^j})=\Tr(\beta^{q^i+q^j})$ depends only on
$i-j\pmod{m}$, each of $G_0$, $G_1$, $G_2$ is a \emph{circulant matrix} over $\Fq$.
Write $\omega_j=\Tr_{\Fqm/\Fq}(\beta^{1+q^j})$ for $j\in\mathbb{Z}/m\mathbb{Z}$.  The
$(i,j)$-entry of $G_0$ is $\Tr(e_i e_j)=\omega_{j-i}$, so
\begin{equation}
  \label{eq:G0frob}
  G_0 \;=\; \mathrm{circ}(\omega_0,\omega_1,\ldots,\omega_{m-1}).
\end{equation}
Since $L(e_i)=e_i^{q^k}=\beta^{q^{k+i-1}}=e_{i+k}$ (indices mod $m$), the operator
$L$ simply shifts the normal basis by $k$.  Therefore
\[
  \Tr(e_i\,L(e_j)) \;=\; \Tr(e_i\,e_{j+k}) \;=\; \omega_{j+k-i},
\]
and~\eqref{eq:G1} gives $(G_1)_{ij}=\omega_{j+k-i}+\omega_{i+k-j}$, which is
also circulant:
\begin{equation}
  \label{eq:G1frob}
  G_1 \;=\; \mathrm{circ}(\omega_k+\omega_{m-k},\,\omega_{k+1}+\omega_{m-k-1},\,\ldots,\,
  \omega_{k+m-1}+\omega_{1-k}).
\end{equation}
Note that in characteristic~$2$, {the conclusion }$G_1=0$ {follows from }Proposition~\ref{prop:char2vanish} {when the Frobenius twist is self-adjoint, equivalently when $m\mid 2k$.  Characteristic}~$2$ {alone does not force $G_1$ to vanish. }For $G_2$,
\[
  \Tr(L(e_i)L(e_j)) \;=\; \Tr(e_{i+k}\,e_{j+k}) \;=\; \omega_{j-i},
\]
so $G_2=G_0$; the Frobenius twist is an isometry of the trace form.
Substituting into~\eqref{eq:Glamu},
\begin{equation}
  \label{eq:GlamuFrob}
  G_{\lambda,\mu} \;=\; (\lambda^2+\mu^2)G_0 \;+\; \lambda\mu\,G_1,
  \qquad (\lambda,\mu)\in\Fq^2.
\end{equation}

Assume henceforth that $\gcd(m,\operatorname{char}(\Fq))=1$, so that {$X^m-1$ is separable.  Choose a finite splitting field $K=\F_{q^s}$ with $m\mid(q^s-1)$ and let $\xi\in K$ }be a primitive $m$-th root of unity{.  The circulant matrices are diagonalized by the DFT over $K$ (not necessarily over $\Fq$).  Their ranks over $\Fq$ agree with their ranks after extension to $K$, and roots relevant to the original base-field pencil are obtained by intersecting the root set in $K$ with $\Fq$.  Write
}$\hat{\omega}(t)=\sum_{j=0}^{m-1}\omega_j\xi^{jt}$ for the DFT of the first row of
$G_0$.  The normal-basis Gram matrix $G_0$ is invertible~\cite{Lidl97}, so
$\hat{\omega}(t)\ne 0$ for all $t$.  The $t$-th eigenvalue of $G_1$ is
$(\xi^{kt}+\xi^{-kt})\hat{\omega}(t)$, and of $G_{\lambda,\mu}$ (for $(\lambda,\mu)\in\Fq^2$)
is $\hat\omega(t)\cdot[(\lambda^2+\mu^2)+\lambda\mu(\xi^{kt}+\xi^{-kt})]
=\hat\omega(t)(\lambda+\mu\xi^{kt})(\lambda+\mu\xi^{-kt})$.  Setting $\mu=1$ and
$\rho=\lambda$,
\begin{equation}
  \label{eq:DeltaFrob}
  \Delta(\rho) \;=\; \det(G_0)\cdot \prod_{t=0}^{m-1}
  \bigl(\rho^2 + (\xi^{kt}+\xi^{-kt})\rho + 1\bigr),
  \qquad \rho\in\Fq,
\end{equation}
a polynomial of degree $2m$ in $\Fq[\rho]$.  The $t$-th factor has roots
$\rho=-\xi^{kt}$ and $\rho=-\xi^{-kt}$, so
\[
  \mathcal{Z}(\Delta) \;\coloneqq\; \{\rho\in\overline{\Fq}:\Delta(\rho)=0\}
  \;=\; \{-\xi^{kt}:t=0,\ldots,m-1\}\cup\{-\xi^{-kt}:t=0,\ldots,m-1\}.
\]
When $d=1$, both sets equal the full set of $m$-th roots of unity, so
$\mathcal{Z}(\Delta)=\{-\zeta:\zeta^m=1\}$.

The following lemma computes the kernel dimension of $\phiop$; Part~(i) of the next proposition is an immediate consequence.
\begin{lemma} 
\label{lem:kerfrob}
Let $L=X^{q^k}$, $d=\gcd(k,m)$, $(\lambda,\mu)\in\Fqm^2$ with $\mu\ne 0$, and $\rho=\lambda/\mu$.
Then
\[
  \dim_{\Fq}\ker(\phiop)
  \;=\;
  \begin{cases}
    d & \text{if }(-\rho)^{(q^m-1)/(q^d-1)}=1,\\
    0 & \text{otherwise.}
  \end{cases}
\]
When the kernel is non-trivial it is an $\mathbb{F}_{q^d}$-subspace of $\Fqm$ of
$\Fq$-dimension $d$, consisting of $0$ together with a single coset of
$\mathbb{F}_{q^d}^*$ in $\Fqm^*$.
For $\mu=0$, $\phiop(x)=\lambda x$ is bijective (for $\lambda\ne 0$).
\end{lemma}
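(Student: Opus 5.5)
Since $\mu\ne0$, I will first note that $\ker\phi_{\lambda,\mu}=\ker\phi_{\rho,1}$. Then I will study the equation $\rho x + x^{q^k}=0$ on $\Fqm^*$, i.e.\ $x^{q^k-1}=-\rho$, equivalently $x^{q^k}=\eta x$ with $\eta=-\rho$. If $\rho=0$ the kernel is trivial, since $x^{q^k}$ is bijective. This agrees with the statement, because $0^{(q^m-1)/(q^d-1)}=0\ne1$. So I will assume $\rho\neq0$.

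\textbf{Step 1 (existence).} By the fact recalled in Section~\ref{sec:background}, a nonzero solution exists if and only if $\eta^{(q^m-1)/(q^d-1)}=1$. If I want to reprove it, I would argue as follows. The map $x\mapsto x^{q^k-1}$ on the cyclic group $\Fqm^*$ has image the subgroup of index $\gcd(q^k-1,q^m-1)=q^d-1$. That subgroup is exactly the set of elements whose $(q^m-1)/(q^d-1)$-th power is $1$. This gives the dichotomy.

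\textbf{Step 2 (structure of the kernel).} Suppose $x_0\neq0$ is a solution. Then a nonzero $x$ is a solution if and only if $(x/x_0)^{q^k-1}=1$. The kernel of $y\mapsto y^{q^k-1}$ on $\Fqm^*$ has order $\gcd(q^k-1,q^m-1)=q^d-1$. Since $\F_{q^d}^*\subseteq\Fqm^*$ (as $d\mid m$) and $\F_{q^d}^*$ satisfies $y^{q^k}=y$ (as $d\mid k$), that kernel is $\F_{q^d}^*$.

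Hence $\ker\phi_{\rho,1}=\{0\}\cup x_0\F_{q^d}^*=x_0\F_{q^d}$. This is a one-dimensional $\F_{q^d}$-subspace, so it has $\Fq$-dimension $d$, and it consists of $0$ plus a single coset of $\F_{q^d}^*$. Alternatively, $\F_{q^d}$-linearity follows directly: $\phi_{\rho,1}(cx)=c\,\phi_{\rho,1}(x)$ for $c\in\F_{q^d}$, since $c^{q^k}=c$.

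\textbf{Step 3 (the case $\mu=0$).} Here $\phi_{\lambda,0}(x)=\lambda x$ with $\lambda\ne0$, which is trivially bijective.

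The only step needing care is the gcd identity $\gcd(q^k-1,q^m-1)=q^{\gcd(k,m)}-1$, together with the index computation in the cyclic group; both are standard.
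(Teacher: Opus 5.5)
Your proposal is correct and follows essentially the same route as the paper. You normalize to $\mu=1$, reduce to $x^{q^k-1}=-\rho$, and use that the power map $x\mapsto x^{q^k-1}$ on the cyclic group $\Fqm^*$ has image of index $q^d-1$ and kernel $\mathbb{F}_{q^d}^*$; your separate treatment of $\rho=0$ and the order-counting argument identifying that kernel with $\mathbb{F}_{q^d}^*$ fill in details the paper leaves implicit.
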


\begin{proof}
Without loss of generality, normalize $\mu=1$.  A nonzero element $x\in\ker(\phiop)$ then satisfies $x^{q^k-1}=-\rho$.
The map $x\mapsto x^{q^k-1}$ on $\Fqm^*$ has image the unique subgroup $H$ of
$\Fqm^*$ of index $\gcd(q^k-1,q^m-1)=q^d-1$, characterized by
$y^{(q^m-1)/(q^d-1)}=1$.  If $-\rho\notin H$ there are no nonzero solutions.  If
$-\rho\in H$, the preimage of $\{-\rho\}$ is a single coset of $\ker(x\mapsto
x^{q^k-1})=\mathbb{F}_{q^d}^*$ in $\Fqm^*$, which together with $0$ is an
$\mathbb{F}_{q^d}$-subspace of $\Fq$-dimension $d$.
\end{proof}

Next, we provide an LCD criterion on $\PP^1(\Fq)$.
\begin{proposition}
\label{prop:frobenius-hull}
Let $L=X^{q^k}$, $d=\gcd(k,m)$, and $\gcd(m,\operatorname{char}(\Fq))=1$.
For $(\lambda:\mu)\in\PP^1(\Fq)$ with {$\mu\ne0$ and $\rho=\lambda/\mu$}:
\begin{enumerate}[label=\textup{(\roman*)}]
  \item $\phiop$ is bijective if and only if {$(-\rho)^{(q^m-1)/(q^d-1)}\ne1$}. At $\mu=0$, $\phiop(x)=\lambda x$ is bijective.
  \item {$\Delta(\rho)=0$ }if and only if {$\phiop$ is non-bijective.  Consequently every root of $\Delta$ lies on the singular locus, and every bijective point is LCD}.
  \item If $d=1$, {the roots of $\Delta$ in $\Fq$ }are exactly the elements $\rho\in\Fq$ {satisfying }$(-\rho)^m=1${; these are singular parameters, not bijective non-LCD parameters}.
\end{enumerate}
\end{proposition}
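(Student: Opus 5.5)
The plan is to derive all three parts from Lemma~\ref{lem:kerfrob} together with the general factorization of Proposition~\ref{prop:lcdiff}, and to avoid the circulant diagonalization except possibly as a cross-check.

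For part~(i), the key point is that $\phiop$ is an $\Fq$-linear endomorphism of the finite-dimensional space $\Fqm$, so it is bijective exactly when its kernel is trivial. Since $\mu\ne0$ and $\phiop=\mu\,\phi_{\rho,1}$, we have $\ker\phiop=\ker\phi_{\rho,1}$. Lemma~\ref{lem:kerfrob} then says this kernel is trivial exactly when $(-\rho)^{(q^m-1)/(q^d-1)}\ne1$, which is the stated criterion. The case $\mu=0$ is immediate: then $\lambda\ne0$ and $x\mapsto\lambda x$ is invertible.

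For part~(ii), I will invoke the identity $\Delta(\rho)=\det(G_0)\det(\rho I+A_L)^2$ from Definition~\ref{def:discriminant} and Proposition~\ref{prop:lcdiff}. Here $G_0$ is the nonsingular trace-form Gram matrix, so for $\rho\in\Fq$ we get $\Delta(\rho)=0$ if and only if $\phi_{\rho,1}$ is singular. By the remark in part~(i), this is the same as $\phiop$ being non-bijective. It follows that every root of $\Delta$ in $\Fq$ lies on the singular locus. Every bijective point is LCD by Theorem~\ref{thm:master}, since a bijective operator has image all of $\Fqm$, whose trace-dual is zero.

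As a consistency check with \eqref{eq:DeltaFrob}, I would observe that $\Delta(\rho)=0$ means $\rho=-\xi^{\pm kt}$ for some $t$. In that case $-\rho$ is an $(m/d)$-th root of unity in $K$; if moreover $\rho\in\Fq$, then $(-\rho)^{m/d}=1$. The relevant order statement is $(-\rho)^{(q^m-1)/(q^d-1)}=1$ for $-\rho\in\Fq^*$. Since $(q^m-1)/(q^d-1)\equiv m/d \pmod{q-1}$, this is the same as $(-\rho)^{m/d}=1$. This is the step I expect to need care: reconciling the two descriptions through that congruence, while remembering that the determinant argument above already settles (ii) on its own.

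Part~(iii) then follows by specializing to $d=1$. The exponent becomes $(q^m-1)/(q-1)=1+q+\cdots+q^{m-1}$, and for $-\rho\in\Fq$ we have $(-\rho)^{q^i}=-\rho$, so the condition reads $(-\rho)^m=1$. Combining this with (ii), the roots of $\Delta$ in $\Fq$ are exactly the $\rho\in\Fq$ with $(-\rho)^m=1$. By (i) these are precisely the non-bijective parameters. Their hull dimension is then determined by the rank difference in Theorem~\ref{thm:master}, not by $\Delta$ alone.
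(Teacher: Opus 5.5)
Your proof is correct. Parts (i) and (ii) follow the paper exactly: (i) comes from Lemma~\ref{lem:kerfrob} together with $\ker\phiop=\ker\phi_{\rho,1}$, and (ii) comes from $\Delta(\rho)=\det(G_0)\det(\rho I+A_L)^2$ with $G_0$ nonsingular.

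For (iii) you take a different and more elementary route. The paper specializes the circulant factorization~\eqref{eq:DeltaFrob} to $d=1$. It reads off the zero set $\{-\zeta:\zeta^m=1\}$ over the splitting field $K$ and then intersects with $\Fq$. You instead combine (ii) with (i) and use the identity $x^{(q^m-1)/(q-1)}=\prod_{i=0}^{m-1}x^{q^i}=x^m$ for every $x\in\Fq$; this also covers $x=0$.

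Your route buys two things:
\begin{itemize}
\item It uses neither the DFT, nor the splitting field, nor the hypothesis $\gcd(m,\ochar(\Fq))=1$.
\item Through your congruence $(q^m-1)/(q^d-1)\equiv m/d\pmod{q-1}$, it gives the general-$d$ statement at once: the base-field roots of $\Delta$ are exactly the $\rho\in\Fq$ with $(-\rho)^{m/d}=1$.
\end{itemize}

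The paper's route buys the explicit root structure over $\overline{\Fq}$, indexed by the frequencies $t$. That is precisely what Theorem~\ref{thm:stratumcount} reuses to compute the multiplicity $\nu(\rho_0)$, and hence the hull dimension at a root. Your argument yields only the set of base-field roots, which is all the proposition requires.
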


\begin{proof}
Part~(i) is Lemma~\ref{lem:kerfrob}{.  Part}~(ii) {follows from
$\Delta(\rho)=\det(G_0)\det(\rho I+A_L)^2$, since $G_0$ is nonsingular}.  For part~(iii), when $d=1$ {the factorization of~\eqref{eq:DeltaFrob} gives }$\mathcal{Z}(\Delta)=\{-\zeta:\zeta^m=1\}$ {over the splitting field; intersection }with $\Fq$ gives the {stated base-field roots}.
\end{proof}

Proposition~\ref{prop:frobenius-hull} identifies {the roots of $\Delta$ with the singular parameters in }$\PP^1(\Fq)$.  We now compute $\dim\ker\widetilde{G}(\rho_0)$ at each {such root; subtracting the operator nullity $d$ then gives the hull dimension}.

For $\rho_0\in\Fq$, define the \emph{frequency multiplicity}
\[
  \nu(\rho_0) \;\coloneqq\;
  \#\bigl\{t\in\{0,\ldots,m-1\} : \rho_0=-\xi^{kt} \text{ or } \rho_0=-\xi^{-kt}\bigr\}.
\]


\begin{theorem}
\label{thm:stratumcount}
Let $L=X^{q^k}${, $d=\gcd(k,m)$, and assume $\gcd(m,\operatorname{char}(\Fq))=1$.  Let }$\rho_0\in\Fq$ be a root of $\Delta${, and write $\rho_0=-\xi^j$ in the splitting field $K$.  Then $\phi_{\rho_0,1}$ is singular with $\dim_{\Fq}\ker\phi_{\rho_0,1}=d$, while
}\[
{\dim_{\Fq}\ker}\widetilde{G}{(\rho_0)=\nu(\rho_0),
}\qquad
{\dim_{\Fq}}\Hull{(C_{\rho_0,1})=\nu(\rho_0)-d,
}\]
{where
}\[
\nu(\rho_0)=\#{\{}t\in\{0,\ldots,m-1{\}}:\rho_0=-\xi^{kt}\text{ or }\rho_0=-\xi^{-kt}\}.
\]
Moreover:
\begin{enumerate}[label=\textup{(\roman*)}]
  \item {if $2j\not\equiv0\pmod m$}, then $\nu(\rho_0)=2d$ and {$\dim\Hull(C_{\rho_0,1})=d$;
  }\item {if $2j\equiv0\pmod m$}, then $\nu(\rho_0)=d$ and {$\dim\Hull(C_{\rho_0,1})=0$}.
\end{enumerate}
{Thus a singular base-field fibre can be either LCD or }non-LCD{; the discriminant detects singularity, while the multiplicity calculation distinguishes the two hull possibilities}.
\end{theorem}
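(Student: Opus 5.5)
The plan is to compute the two ranks in the master formula of Theorem~\ref{thm:master} separately, $\rank(\phi_{\rho_0,1})$ from Lemma~\ref{lem:kerfrob} and $\rank(\widetilde G(\rho_0))$ from the circulant diagonalization, and then take the difference.

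\emph{Step 1: the operator nullity.} Since $\rho_0\in\Fq$ is a root of $\Delta$, Proposition~\ref{prop:frobenius-hull}(ii) says that $\phi_{\rho_0,1}$ is non-bijective. Lemma~\ref{lem:kerfrob} then gives $\dim_{\Fq}\ker\phi_{\rho_0,1}=d$, so $\rank(\phi_{\rho_0,1})=m-d$.

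\emph{Step 2: the Gram nullity.}
\begin{itemize}
\item By \eqref{eq:GlamuFrob} with $\mu=1$, $\widetilde G(\rho_0)=(\rho_0^2+1)G_0+\rho_0G_1$ is a circulant matrix over $\Fq$.
\item Because $\gcd(m,\ochar\Fq)=1$, the DFT matrix $(\xi^{st})$ over $K$ is invertible and diagonalizes every circulant simultaneously.
\item The $t$-th eigenvalue is $\hat\omega(t)(\rho_0+\xi^{kt})(\rho_0+\xi^{-kt})$, and $\hat\omega(t)\neq0$ because $G_0$ is invertible.
\item For a diagonalizable matrix, the nullity equals the number of zero eigenvalues. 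Rank is unchanged under the field extension $\Fq\subseteq K$.
\end{itemize}
Together these give $\dim_{\Fq}\ker\widetilde G(\rho_0)=\dim_K\ker\widetilde G(\rho_0)=\nu(\rho_0)$. Combining with Step 1, Theorem~\ref{thm:master} yields $\dim\Hull(C_{\rho_0,1})=(m-d)-(m-\nu(\rho_0))=\nu(\rho_0)-d$.

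\emph{Step 3: counting $\nu(\rho_0)$.} This is the step needing the most care.
\begin{itemize}
\item Write $\rho_0=-\xi^j$. Since $\rho_0$ lies in $\mathcal Z(\Delta)$, some $t$ satisfies $\xi^{\pm kt}=\xi^j$, so $j\equiv\pm kt\pmod m$.
\item In either case $d\mid j$, because $d$ divides both $k$ and $m$.
\item The congruence $kt\equiv j\pmod m$ in $t\in\mathbb Z/m\mathbb Z$ then has exactly $d$ solutions. This is the standard fact that multiplication by $k$ on $\mathbb Z/m$ has kernel of size $d$ and image $d\mathbb Z/m$.
\item Likewise $-kt\equiv j$ has exactly $d$ solutions.
\end{itemize}
So $\nu(\rho_0)=|S_+\cup S_-|$, where $S_\pm=\{t: \pm kt\equiv j\}$ and $|S_\pm|=d$.

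If some $t$ lies in both sets, then $kt\equiv j$ and $kt\equiv -j$, which forces $2j\equiv0\pmod m$. Conversely, if $2j\equiv 0$, then $j\equiv -j$ and the two congruences coincide, so $S_+=S_-$. Hence the sets are disjoint exactly when $2j\not\equiv0$ and equal exactly when $2j\equiv0$. This gives $\nu=2d$ in case~(i) and $\nu=d$ in case~(ii), so the hull dimension is $d$ and $0$ respectively.

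The main point of care is in Step~2: the correct identification of $\ker\widetilde G(\rho_0)$ requires diagonalizability, not merely counting roots of $\Delta$ with multiplicity. Diagonalizability follows from invertibility of the DFT matrix, which is where $\gcd(m,\ochar\Fq)=1$ is used.

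As a consistency check, in case~(i) the hull dimension $d$ must satisfy $d\le\lfloor m/2\rfloor$. This holds since $2j\not\equiv0$ with $d\mid j$ forces $d<m$, hence $d\le m/2$.
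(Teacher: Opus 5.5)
Your proposal is correct and follows essentially the same route as the paper's proof. You get operator nullity $d$ from Proposition~\ref{prop:frobenius-hull}(ii) and Lemma~\ref{lem:kerfrob}, and Gram nullity $\nu(\rho_0)$ from simultaneous DFT diagonalization of the circulants, using $\hat\omega(t)\neq0$ and rank invariance under scalar extension. Then you apply Theorem~\ref{thm:master} and count solutions of $kt\equiv\pm j\pmod m$. Two of your points are small clarifications the paper leaves implicit: your check that $d\mid j$ (so each congruence has exactly $d$ solutions), and your remark that the nullity comes from diagonalizability rather than from the root multiplicity in $\Delta$, which is always $2d$.
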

\begin{proof}
{Over the splitting field $K$, the DFT simultaneously diagonalizes the circulant matrices and gives diagonal entries
$d_t=\hat\omega(t)(\rho_0+\xi^{kt})(\rho_0+\xi^{-kt})$}.  Since $G_0$ is {nonsingular, $\hat\omega(t)\ne0$, so the Gram nullity is exactly }$\nu(\rho_0)$.  {Scalar extension preserves rank, hence this is also the nullity over $\Fq$.  }Because {$\rho_0$ is a root of $\Delta$, Proposition~\ref{prop:frobenius-hull} shows that $\phi_{\rho_0,1}$ is singular; Lemma~\ref{lem:kerfrob} gives operator nullity $d$.  Therefore Theorem~\ref{thm:master} yields
$\dim\Hull=\nu(\rho_0)-d$.  The congruences $kt\equiv j\pmod m$ }and {$kt\equiv-j\pmod m$ each have }$d$ solutions{. They are disjoint when $2j\not\equiv0\pmod m$ and coincide when $2j\equiv0\pmod m$, proving the two cases}.
\end{proof}

\begin{remark}
The {exceptional roots with $2j\equiv0\pmod m$ have }$\nu(\rho_0)=d$ {and therefore hull dimension $0$; the remaining base-field roots have }$\nu(\rho_0)=2d$ {and hull dimension $d$}.
\end{remark}

\begin{corollary}
\label{cor:hullvalues}
Let $d=1$ and $\gcd(m,\operatorname{char}(\Fq))=1$.  {At a base-field root $\rho_0=-\xi^j$ of $\Delta$, one has}
\[
{\dim\Hull(C_{\rho_0,1})=
\begin{cases}
0,&2j\equiv0\pmod m,\\
1,&2j\not\equiv0\pmod m.
\end{cases}}
\]
{Every base-field point that is not a root of $\Delta$ is bijective and hence LCD.}
\end{corollary}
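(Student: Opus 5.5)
The plan is to read Corollary~\ref{cor:hullvalues} off Theorem~\ref{thm:stratumcount} by setting $d=1$, and then handle the non-root points with Proposition~\ref{prop:frobenius-hull}.

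\emph{Root points.} Let $\rho_0\in\Fq$ be a root of $\Delta$. By Proposition~\ref{prop:frobenius-hull}(iii), with $d=1$ we have $(-\rho_0)^m=1$. So $-\rho_0$ is an $m$-th root of unity in the splitting field $K$, and we may write $\rho_0=-\xi^j$ for some $j\in\mathbb{Z}/m\mathbb{Z}$. Because $\xi$ is a primitive $m$-th root of unity, $j$ is uniquely determined modulo $m$, so the condition $2j\equiv0\pmod m$ is well posed.

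Now apply Theorem~\ref{thm:stratumcount} with $d=1$:
\begin{itemize}
\item if $2j\not\equiv0\pmod m$, case~(i) gives $\nu(\rho_0)=2$ and $\dim\Hull(C_{\rho_0,1})=1$;
\item if $2j\equiv0\pmod m$, case~(ii) gives $\nu(\rho_0)=1$ and $\dim\Hull(C_{\rho_0,1})=0$.
\end{itemize}

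The counting step behind this is easy to check directly. Since $\gcd(k,m)=1$, multiplication by $k$ permutes $\mathbb{Z}/m\mathbb{Z}$, so each congruence $kt\equiv\pm j \pmod m$ has exactly one solution. These two solutions coincide precisely when $j\equiv -j$, that is, when $2j\equiv0\pmod m$. Lemma~\ref{lem:kerfrob} supplies the operator nullity $1$, which is subtracted from $\nu(\rho_0)$ in Theorem~\ref{thm:stratumcount}.

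\emph{Non-root points.} Let $\rho\in\Fq$ with $\Delta(\rho)\neq0$. By Proposition~\ref{prop:frobenius-hull}(ii), $\phi_{\rho,1}$ is bijective. It is then LCD by the last clause of Theorem~\ref{thm:master}, since its image is all of $\Fqm$ and $\Fqm^\perp=\{0\}$.

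The point at infinity $\mu=0$ is covered by Proposition~\ref{prop:frobenius-hull}(i): there $\phiop(x)=\lambda x$ is bijective, hence LCD as well.

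\emph{Main obstacle.} Almost nothing here is new, since everything reduces to Theorem~\ref{thm:stratumcount}. The one point needing care is to confirm that its hypotheses hold as used. In particular, the representation $\rho_0=-\xi^j$ must exist; this follows from $(-\rho_0)^m=1$ together with $\gcd(m,\ochar\Fq)=1$, which makes $X^m-1$ separable and so guarantees that $\xi$ generates all $m$-th roots of unity.
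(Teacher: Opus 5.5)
Your proposal is correct and follows essentially the same route as the paper, which proves the corollary by specializing Theorem~\ref{thm:stratumcount} to $d=1$ and citing Proposition~\ref{prop:frobenius-hull} for the non-root points, which are bijective and hence LCD. Your additional checks are consistent with that argument but not needed beyond it: you confirm that $j$ is well defined modulo $m$, count the solutions of $kt\equiv\pm j\pmod m$ directly, and treat the point $\mu=0$ separately.
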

\begin{proof}
{This is }Theorem~\ref{thm:stratumcount} with $d=1${, together with Proposition~\ref{prop:frobenius-hull}}.
\end{proof}

The stratum counts above bound the non-LCD locus; the next corollary makes this precise and shows that LCD codes predominate for large $q$.
\begin{corollary}
\label{cor:lcdcount}
Let $L=X^{q^k}$, $d=\gcd(k,m)$, and $\gcd(m,\operatorname{char}(\Fq))=1$.  {The non-LCD points of $\PP^1(\Fq)$ can occur only among the distinct roots of $\Delta$, hence there are at most $m$ of them.  More precisely, if $\rho=-\xi^j\in\Fq$ is a root, then it is non-LCD exactly when $2j\not\equiv0\pmod m$.  Consequently, for fixed $m$ the proportion of LCD points in $\PP^1(\Fq)$ tends to $1$ as $q\to\infty$}.
\end{corollary}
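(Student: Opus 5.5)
The plan is to assemble the corollary from three earlier results. First, Proposition~\ref{prop:frobenius-hull}(ii) shows that every point of $\PP^1(\Fq)$ that is not a root of $\Delta$ is bijective and hence LCD. By Proposition~\ref{prop:frobenius-hull}(i), the point at infinity $\mu=0$ is also bijective. So the non-LCD points all lie among the finite parameters $\rho=\lambda/\mu\in\Fq$ with $\Delta(\rho)=0$. Theorem~\ref{thm:grampencil}(iv), or equivalently Proposition~\ref{prop:lcdiff}, states that the distinct roots of $\Delta$ are the roots of the degree-$m$ polynomial $\det(\rho I+A_L)$. There are therefore at most $m$ of them, which gives the first bound.

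For the precise description, I will take a base-field root $\rho=-\xi^j\in\Fq$ and invoke Theorem~\ref{thm:stratumcount}. It gives $\dim\Hull(C_{\rho,1})=d$ when $2j\not\equiv0\pmod m$ and $\dim\Hull(C_{\rho,1})=0$ when $2j\equiv0\pmod m$. Since $d\ge1$, the point is non-LCD exactly in the first case.

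One subtlety is that the exponent $j$ in $\rho=-\xi^j$ must be well defined modulo $m$. This holds because $\xi$ is a primitive $m$-th root of unity, so $-\rho$ determines $j$ modulo $m$. The condition $2j\equiv0\pmod m$ is then intrinsic: it is equivalent to $\rho^2=1$. I also need $\rho$ to be of this form at all. By \eqref{eq:DeltaFrob}, every root of $\Delta$ is $-\xi^{\pm kt}$, which is $-\xi^j$ for some $j$.

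For the asymptotic statement, $|\PP^1(\Fq)|=q+1$, and the number of non-LCD points is at most $m$, independent of $q$. The proportion of LCD points is then at least $1-m/(q+1)$, which tends to $1$ as $q\to\infty$ with $m$ fixed. Here $q$ should range over prime powers coprime to $m$ so that the hypothesis stays in force, but the bound does not depend on it.

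I expect no real obstacle. The only delicate point is keeping the distinct-root count at $m$ rather than $2m$, which is exactly what Proposition~\ref{prop:lcdiff} provides.
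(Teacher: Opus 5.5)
Your proposal is correct and follows essentially the same route as the paper's proof. It uses bijectivity (hence LCD) off the roots of $\Delta$ via Proposition~\ref{prop:frobenius-hull}, the $0$ versus $d$ dichotomy at a root via Theorem~\ref{thm:stratumcount}, the degree-$m$ factor $\det(\rho I+A_L)$ to bound the number of distinct roots, and $|\PP^1(\Fq)|=q+1$ for the density. Your extra remark that $2j\equiv 0\pmod m$ is equivalent to $\rho^2=1$ is correct, and it makes the criterion independent of the choice of $\xi$, which the paper leaves implicit.
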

\begin{proof}
{Outside $\mathcal Z(\Delta)$ the operator is bijective by }Proposition~\ref{prop:frobenius-hull}{, hence the image code is LCD.  At a root, Theorem~\ref{thm:stratumcount} gives hull dimension }either $0$ or {$d$, with the latter occurring precisely when $2j\not\equiv0\pmod m$.  Since $\Delta(\rho)=\det(G_0)\det(\rho I+A_L)^2$ and $\det(\rho I+A_L)$ has degree $m$, there are at most $m$ distinct base-field roots.  The density conclusion follows from $|\PP^1(\Fq)|=q+1$}.
\end{proof}

The results so far cover $(\lambda,\mu)\in\Fq^2$.  To extend the hull analysis to the full projective line $\PP^1(\Fqm)$, we compute the adjoint of $\phiop$ explicitly.  No hypothesis on $\gcd(m,\operatorname{char}(\Fq))$ is needed for this.

\begin{lemma} 
\label{lem:adjoint}
For any $(\lambda,\mu)\in\Fqm^2$, the adjoint of $\phiop(x)=\lambda x+\mu x^{q^k}$
with respect to $\Tr_{\Fqm/\Fq}$ is
\begin{equation}
  \label{eq:adjoint}
  \phiop^{\dagger}(y) \;=\; \lambda\,y \;+\; \mu^{q^{m-k}}\,y^{q^{m-k}}.
\end{equation}
\end{lemma}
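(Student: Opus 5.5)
We have to show that $\Tr(\phiop(x)\,y)=\Tr(x\,\phiop^{\dagger}(y))$ for all $x,y\in\Fqm$. Uniqueness of the adjoint then gives the formula, because the trace form is non-degenerate. Since both sides are additive in the operator, we handle the two summands of $\phiop$ separately. This is also a special case of the general rule $L^\dagger=\sum_i a_i^{q^{m-i}}X^{q^{m-i}}$ recalled in Section~\ref{sec:background}, applied to $L=\lambda X+\mu X^{q^k}$. Here the $i=0$ term gives $\lambda X$, because $q^{m}\equiv 1$ acts trivially on $\Fqm$. We will nonetheless verify the formula directly, so that nothing depends on how that general rule was derived.

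\emph{The multiplication part.} For the term $\lambda x$, we have $\Tr(\lambda x\,y)=\Tr(x\,\lambda y)$ by commutativity. So multiplication by $\lambda$ is self-adjoint. This holds for every $\lambda\in\Fqm$, not only for $\lambda\in\Fq$, since the scalar simply moves across the product inside the trace.

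\emph{The Frobenius part.} For the term $\mu x^{q^k}$, we use that the trace is invariant under any Frobenius power: $\Tr(z)=\Tr(z^{q^{j}})$ for all $z\in\Fqm$ and all $j$. We take $j=m-k$ and $z=\mu x^{q^k}y$. This gives
\[
\Tr(\mu x^{q^k}y)=\Tr\bigl(\mu^{q^{m-k}}x^{q^{m}}y^{q^{m-k}}\bigr)=\Tr\bigl(x\,\mu^{q^{m-k}}y^{q^{m-k}}\bigr),
\]
where we used $x^{q^m}=x$ on $\Fqm$. Adding the two contributions yields $\Tr(\phiop(x)\,y)=\Tr\bigl(x\,(\lambda y+\mu^{q^{m-k}}y^{q^{m-k}})\bigr)$.

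\emph{Conclusion.} We then invoke uniqueness of the adjoint. Both $\phiop^{\dagger}$ and the right-hand side of \eqref{eq:adjoint} are $\Fq$-linear. If two $\Fq$-linear maps $S,S'$ satisfy $\Tr(x\,S(y))=\Tr(x\,S'(y))$ for all $x$, then $S(y)-S'(y)$ lies in the radical of the trace form, which is $\{0\}$ because the form is non-degenerate. The only subtlety is that $\lambda$ and $\mu$ range over $\Fqm$ rather than $\Fq$. This causes no difficulty, since the argument never requires the scalars to be fixed by Frobenius. The conjugation of $\mu$ to $\mu^{q^{m-k}}$ in the Frobenius step is exactly what absorbs their action. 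So there is no genuine obstacle; the one point needing care is applying the Frobenius power to the whole product $\mu x^{q^k}y$, rather than to $x$ alone.
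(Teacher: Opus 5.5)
Your proof is correct and follows the same route as the paper. The paper uses the identity $\Tr(x^{q^k}y)=\Tr(x\,y^{q^{m-k}})$, applied with $\mu y$ in place of $y$, which is exactly your step of raising the whole product $\mu x^{q^k}y$ to the $q^{m-k}$ power. It also treats the $\lambda$-term as self-adjoint, as you do; your explicit uniqueness argument via non-degeneracy only spells out what the paper leaves implicit.
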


\begin{proof}
The key identity is that $\Tr(x^{q^k}\cdot y)=\Tr(x\cdot y^{q^{m-k}})$ holds for all
$x,y\in\Fqm$, so
\[
  \Tr(\phiop(x)\cdot y)
  \;=\; \Tr(\lambda xy)+\Tr(\mu x^{q^k}y)
  \;=\; \Tr\!\bigl(x\cdot(\lambda y+\mu^{q^{m-k}}y^{q^{m-k}})\bigr)
  \;=\; \Tr(x\cdot\phiop^{\dagger}(y)).\qedhere
\]
\end{proof}

With the adjoint in hand, the hull admits an intersection-theoretic description valid over all of $\PP^1(\Fqm)$.
\begin{theorem} 
\label{thm:hullgeneral}
For every $(\lambda,\mu)\in\Fqm^2\setminus\{(0,0)\}$,
\begin{equation}
  \label{eq:hullgeneral}
  \Hull(\Clm) \;=\; \im(\phiop)\;\cap\;\ker(\phiop^{\dagger}).
\end{equation}
\end{theorem}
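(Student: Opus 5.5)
The plan is to prove the set equality by establishing the standard linear-algebra identity $(\im T)^{\perp}=\ker(T^{\dagger})$ for any $\Fq$-linear operator $T$ on $\Fqm$. Here the orthogonal complement is taken with respect to the trace form, and $T^\dagger$ is its adjoint. We then apply this identity to $T=\phiop$, whose adjoint is given explicitly by Lemma~\ref{lem:adjoint}. Since $\Hull(\Clm)=\Clm\cap\Clm^{\perp}$ and $\Clm=\im(\phiop)$ by definition, the theorem follows immediately once $\Clm^{\perp}=\ker(\phiop^\dagger)$ is known.

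For the identity itself, we argue by double inclusion using only the defining property $\langle T(a),b\rangle=\langle a,T^\dagger(b)\rangle$ and non-degeneracy of $\Tr_{\Fqm/\Fq}(ab)$. If $b\in\ker T^\dagger$, then for every $a$ we have $\langle T(a),b\rangle=\langle a,0\rangle=0$, so $b\in(\im T)^\perp$. Conversely, suppose $b\in(\im T)^\perp$. Then $\langle a,T^\dagger(b)\rangle=\langle T(a),b\rangle=0$ for all $a\in\Fqm$. Non-degeneracy of the trace form then forces $T^\dagger(b)=0$. No dimension count is needed, though one could cross-check via $\dim(\im T)^\perp=m-\rank T=m-\rank T^\dagger$.

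We note that $\phiop$ is genuinely $\Fq$-linear even for $(\lambda,\mu)\in\Fqm^2$, so the trace-form adjoint is well defined in that generality. This is why the statement holds on all of $\PP^1(\Fqm)$ with no hypothesis on $\gcd(m,\ochar\Fq)$; the Gram/circulant machinery is not used at all. There is no real obstacle. The only point requiring care is that the adjoint formula of Lemma~\ref{lem:adjoint} relies on $\Tr(x^{q^k}y)=\Tr(xy^{q^{m-k}})$, which holds because the trace is Frobenius-invariant. That lemma is already established, so the proof reduces to the two-line double inclusion above.
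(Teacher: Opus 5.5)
Your proposal is correct and matches the paper's proof: both identify $(\im\phiop)^{\perp}$ with $\ker(\phiop^{\dagger})$ using the adjoint relation of Lemma~\ref{lem:adjoint} and the non-degeneracy of the trace form, and then intersect with $\im(\phiop)$. Your double-inclusion argument only spells out the chain of set equalities that the paper writes in one line.
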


\begin{proof}
By non-degeneracy of $\Tr_{\Fqm/\Fq}$ and Lemma~\ref{lem:adjoint},
$(\im\phiop)^{\perp}
=\{y:\Tr(\phiop(x)y)=0\;\forall x\}
=\{y:\Tr(x\cdot\phiop^{\dagger}(y))=0\;\forall x\}
=\ker(\phiop^{\dagger})$.
Hence $\Hull(\Clm)=\im(\phiop)\cap(\im\phiop)^{\perp}=\im(\phiop)\cap\ker(\phiop^{\dagger})$.
\end{proof}

The adjoint $\phiop^{\dagger}$ has the same form as $\phiop$ with twist $q^{m-k}$ and
scalar $\mu^{q^{m-k}}$ in place of $\mu$; since $\gcd(m-k,m)=d$,
Lemma~\ref{lem:kerfrob} applies to $\phiop^{\dagger}$ with $m-k$ replacing $k$.

{Because the trace form is non-degenerate, an operator and its adjoint have the same rank.  Hence $\phiop$ is bijective if and only if $\phiop^{\dagger}$ is bijective; the asymmetric cases considered in the previous version cannot occur.  For $\mu\ne0$ put $\rho=\lambda/\mu$ and define
}\[
{\varepsilon}\coloneqq\bigl[(-\rho)^{(q^m-1)/(q^d-1)}=1\bigr].
\]
By Lemma~\ref{lem:kerfrob}{, $\varepsilon=1$ exactly when }$\phiop$ {(equivalently $\phiop^\dagger$) is singular, in which case both kernels have $\Fq$-dimension $d$.
}
\begin{theorem}
\label{thm:hullfrob}
Let $L=X^{q^k}$, $d=\gcd(k,m)$, and $(\lambda,\mu)\in\Fqm^2$ with $\mu\ne0$.  Then
\[
{\dim_{\Fq}C_{\lambda,\mu}}={m-\varepsilon d,
}\]
and
\[
\dim_{\Fq}\Hull({C_{\lambda,\mu})}=
\begin{cases}
0,&\varepsilon=0,\\
\delta=\dim_{\Fq}(\im\phiop\cap\ker\phiop{^\dagger),}&{\varepsilon}=1,
\end{cases}
\qquad 0\le\delta\le d.
\]
{In the singular case, $\delta=d$ if and only if $\ker\phiop^\dagger$ is totally isotropic for the trace pairing.
}\end{theorem}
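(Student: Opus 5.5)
The dimension formula comes directly from Lemma~\ref{lem:kerfrob} and rank--nullity. For $\mu\ne0$ that lemma gives $\dim_{\Fq}\ker\phiop=\varepsilon d$, so $\dim_{\Fq}C_{\lambda,\mu}=\rank(\phiop)=m-\varepsilon d$. For the hull, I will start from the intersection description of Theorem~\ref{thm:hullgeneral}, $\Hull(\Clm)=\im\phiop\cap\ker\phiop^\dagger$. It holds for every $(\lambda,\mu)\in\Fqm^2$, so it sidesteps the quadratic Gram expansion, which is unavailable off the base field.

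The cases split by $\varepsilon$. If $\varepsilon=0$, then $\phiop$ is bijective. Since the trace form is non-degenerate, $\rank\phiop^\dagger=\rank\phiop=m$, so $\ker\phiop^\dagger=\{0\}$ and the hull is trivial. Equivalently, the remark after Definition~\ref{def:strata} applies: $\im\phiop=\Fqm$ and $\Fqm^\perp=\{0\}$. If $\varepsilon=1$, then $\delta=\dim(\im\phiop\cap\ker\phiop^\dagger)$ by definition. The bound $\delta\le d$ holds because $\ker\phiop^\dagger$ has $\Fq$-dimension $d$. This follows either from equality of ranks of an operator and its adjoint, or from Lemma~\ref{lem:kerfrob} applied with $m-k$ in place of $k$ and $\mu^{q^{m-k}}$ in place of $\mu$, using $\gcd(m-k,m)=d$.

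The remaining claim, $\delta=d$ iff $\ker\phiop^\dagger$ is totally isotropic, is the one step that needs an actual argument. I will use the identity $\im\phiop=(\ker\phiop^\dagger)^\perp$, obtained by taking perps in $(\im\phiop)^\perp=\ker\phiop^\dagger$ (proved in Theorem~\ref{thm:hullgeneral}) and using non-degeneracy on the finite-dimensional space, so that $V^{\perp\perp}=V$. Setting $W=\ker\phiop^\dagger$, this gives $\Hull(\Clm)=W\cap W^\perp$, the radical of the trace form restricted to $W$. Hence $\delta=\dim W=d$ iff $W\subseteq W^\perp$, that is, iff $W$ is totally isotropic. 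The main care needed is to justify $(\im\phiop)^{\perp\perp}=\im\phiop$ cleanly and to keep straight which kernel, that of $\phiop$ or of $\phiop^\dagger$, appears. No hypothesis on $\gcd(m,\ochar\Fq)$ is needed anywhere in this argument.
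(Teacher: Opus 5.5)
Your proposal is correct and follows the paper's proof essentially step for step. Both arguments use the same ingredients: the kernel dimension from Lemma~\ref{lem:kerfrob}, the intersection formula of Theorem~\ref{thm:hullgeneral}, $\dim\ker\phiop^\dagger=d$ from equality of adjoint ranks, and the identity $\im\phiop=(\ker\phiop^\dagger)^\perp$ to reduce $\delta=d$ to total isotropy; your double-perp justification only makes explicit a step the paper states without comment.
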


\begin{proof}
Lemma~\ref{lem:kerfrob} {gives the dimension of $\ker\phiop$ and hence that of its image.  }Theorem~\ref{thm:hullgeneral} gives
{$\Hull(C_{\lambda,\mu})=\im\phiop\cap\ker\phiop^\dagger$.  If $\varepsilon=0$, then $\phiop$ is bijective }and {the hull is zero.  If $\varepsilon=1$, both $\ker\phiop$ and $\ker\phiop^\dagger$ have dimension $d$, whence $0\le\delta\le d$.  Finally, since
$\im\phiop=(\ker\phiop^\dagger)^\perp$, equality }$\delta=d$ {is equivalent to
$\ker\phiop^\dagger\subseteq(\ker\phiop^\dagger)^\perp$, which is precisely total isotropy}.
\end{proof}

\begin{remark}
\label{rem:casesummary}
{There are only two rank configurations: the bijective case, which is always LCD}, {and the singular case, in which $\dim\Hull=\delta$ with $0\le\delta\le d$.  The adjoint has the same rank as the original operator, so no mixed bijectivity cases occur}.
\end{remark}

The {singular case }admits the following explicit criterion for the extremal value $\dim\Hull=d$ in terms of a generator of $\ker\phiop^\dagger$.

\begin{proposition}
\label{prop:isotropic}
Let $L=X^{q^k}$ {and assume $\phiop$ is singular (equivalently, }$\phiop^{\dagger}$ {is singular)}.  Let $x_0\in\ker\phiop^{\dagger}\setminus\{0\}$.  Then
$\ker\phiop^{\dagger}=\langle x_0 \rangle_{\mathbb{F}_{q^d}}$,
and this subspace is totally isotropic for
$\Tr_{\Fqm/\Fq}$ if and only if
\begin{equation}
  \label{eq:isocrit}
  \Tr_{\Fqm/\mathbb{F}_{q^d}}(x_0^2)=0.
\end{equation}
Equivalently, all $\Fq$-inner products among elements of the $\mathbb{F}_{q^d}$-subspace
$\ker\phiop^{\dagger}=\langle x_0 \rangle_{\mathbb{F}_{q^d}}$ vanish.  Consequently,
$\dim\Hull(\Clm)=d$ if and only if~\eqref{eq:isocrit} holds.
\end{proposition}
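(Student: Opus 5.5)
The plan is to establish the two claims in turn: first the structure of $\ker\phiop^{\dagger}$, then the isotropy criterion. The final equivalence with $\dim\Hull(\Clm)=d$ is then read off from Theorem~\ref{thm:hullfrob}.

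\emph{Structure of the kernel.} By Lemma~\ref{lem:adjoint}, $\phiop^{\dagger}(y)=\lambda y+\mu' y^{q^{m-k}}$ with $\mu'=\mu^{q^{m-k}}\neq0$. This is an operator of the same shape as $\phiop$, with twist $m-k$, and $\gcd(m-k,m)=d$.
\begin{itemize}
\item Apply Lemma~\ref{lem:kerfrob} to $\phiop^{\dagger}$. Since $\phiop^{\dagger}$ is singular, its kernel is an $\mathbb{F}_{q^d}$-subspace of $\Fqm$ of $\Fq$-dimension $d$. It consists of $0$ together with a single coset $x_0\mathbb{F}_{q^d}^*$ of $\mathbb{F}_{q^d}^*=\ker(y\mapsto y^{q^{m-k}-1})$.
\item Concretely, the nonzero solutions are exactly the $y$ with $y^{q^{m-k}-1}=-\lambda/\mu'$. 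If $y,x_0$ are two such solutions, then $(y/x_0)^{q^{m-k}-1}=1$, so $y/x_0\in\mathbb{F}_{q^d}^*$.
\item Hence $\ker\phiop^{\dagger}=\{ax_0:a\in\mathbb{F}_{q^d}\}=\langle x_0\rangle_{\mathbb{F}_{q^d}}$, a one-dimensional $\mathbb{F}_{q^d}$-space.
\end{itemize}

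\emph{Isotropy criterion.} For $a,b\in\mathbb{F}_{q^d}$, use transitivity of the trace, $\Tr_{\Fqm/\Fq}=\Tr_{\mathbb{F}_{q^d}/\Fq}\circ\Tr_{\Fqm/\mathbb{F}_{q^d}}$, together with $\mathbb{F}_{q^d}$-linearity of the inner trace. This gives
\[
\Tr_{\Fqm/\Fq}\bigl((ax_0)(bx_0)\bigr)=\Tr_{\mathbb{F}_{q^d}/\Fq}\bigl(ab\,\Tr_{\Fqm/\mathbb{F}_{q^d}}(x_0^2)\bigr).
\]
Set $\tau=\Tr_{\Fqm/\mathbb{F}_{q^d}}(x_0^2)\in\mathbb{F}_{q^d}$. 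Taking $b=1$, the product $ab$ ranges over all of $\mathbb{F}_{q^d}$ as $a$ varies.
\begin{itemize}
\item The pairing vanishes on all of $\langle x_0\rangle_{\mathbb{F}_{q^d}}$ if and only if $\Tr_{\mathbb{F}_{q^d}/\Fq}(c\tau)=0$ for every $c\in\mathbb{F}_{q^d}$.
\item By non-degeneracy of the trace form of $\mathbb{F}_{q^d}/\Fq$, this holds if and only if $\tau=0$, which is~\eqref{eq:isocrit}.
\item The "equivalently'' sentence in the statement is just the definition of total isotropy restated for this subspace.
\end{itemize}

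\emph{Conclusion.} Theorem~\ref{thm:hullfrob} states that, in the singular case, $\delta=d$ if and only if $\ker\phiop^{\dagger}$ is totally isotropic. Combining this with the criterion above gives $\dim\Hull(\Clm)=d$ if and only if~\eqref{eq:isocrit} holds.

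The only point needing care is that Lemma~\ref{lem:kerfrob} is stated for $\mu\in\Fqm$ with twist $k$, so it must be transferred to the adjoint with twist $m-k$ and scalar $\mu'$. This transfer is immediate because the proof of that lemma uses only $\gcd(m-k,m)=d$. Everything else is routine.
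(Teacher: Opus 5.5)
Your proposal is correct and follows essentially the same route as the paper. You apply Lemma~\ref{lem:kerfrob} to the adjoint (twist $m-k$, with $\gcd(m-k,m)=d$) to obtain the $\mathbb{F}_{q^d}$-line $\langle x_0\rangle_{\mathbb{F}_{q^d}}$, then use the tower property of the trace with non-degeneracy of $\Tr_{\mathbb{F}_{q^d}/\Fq}$, and conclude via Theorem~\ref{thm:hullfrob}; your one unstated assumption, $\mu\neq0$ (so that $\mu^{q^{m-k}}\neq0$), is automatic because $\mu=0$ would make $\phiop=\lambda I$ bijective.
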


\begin{proof}
By Lemma~\ref{lem:kerfrob} applied to $\phiop^{\dagger}$, the kernel $\ker\phiop^{\dagger}$ is an
$\mathbb{F}_{q^d}$-line, so $\ker\phiop^{\dagger}=\langle x_0 \rangle_{\mathbb{F}_{q^d}}$ for any nonzero
$x_0\in\ker\phiop^{\dagger}$.  Let $y_1=ax_0$ and $y_2=bx_0$ with
$a,b\in\mathbb{F}_{q^d}$.  Using the tower property of the trace,
\[
  \Tr_{\Fqm/\Fq}(y_1y_2)
  = \Tr_{\Fqm/\Fq}(abx_0^2)
  = \Tr_{\mathbb{F}_{q^d}/\Fq}\!\Bigl(ab\,\Tr_{\Fqm/\mathbb{F}_{q^d}}(x_0^2)\Bigr).
\]
Since the pairing $(a,b)\mapsto \Tr_{\mathbb{F}_{q^d}/\Fq}(ab)$ on $\mathbb{F}_{q^d}$ is non-degenerate,
this vanishes for all $a,b\in\mathbb{F}_{q^d}$ if and only if
$\Tr_{\Fqm/\mathbb{F}_{q^d}}(x_0^2)=0$.  The final claim follows from
Theorem~\ref{thm:hullfrob}.
\end{proof}

A particularly clean situation arises when the operator coincides with its own adjoint.

\begin{theorem} 
\label{thm:selfadjoint}
{If $\mu=0$, then $\phiop(x)=\lambda x$ is automatically self-adjoint.  If $\mu\ne0$, then the operator $\phiop(x)=\lambda x+\mu x^{q^k}$ is self-adjoint (i.e., $\phiop=\phiop^{\dagger}$) if and only if $m\mid 2k$ and $\mu\in\mathbb{F}_{q^d}^*$.}
When these conditions hold, the hull is
\[
  \Hull(\Clm)=\im(\phiop){\cap}\ker(\phiop).
\]
The code is LCD if $\phiop$ is bijective; otherwise, $\dim\Hull(\Clm)=\delta$ {with $0\le\delta\le d$, and }$\delta=d$ if and only if $\ker\phiop$ is totally isotropic.
\end{theorem}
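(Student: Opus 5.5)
The plan is to compare $\phiop$ and $\phiop^{\dagger}$ directly as $q$-polynomials, using the explicit adjoint from Lemma~\ref{lem:adjoint}. Then read off the hull statements from Theorems~\ref{thm:hullgeneral} and~\ref{thm:hullfrob}.

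\emph{The case $\mu=0$.} Self-adjointness is immediate: $\Tr(\lambda x\,y)=\Tr(x\,\lambda y)$ for all $x,y$.

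\emph{The case $\mu\ne0$: reduction to coefficients.} By Lemma~\ref{lem:adjoint},
\[
\phiop^{\dagger}(y)=\lambda y+\mu^{q^{m-k}}y^{q^{m-k}}.
\]
Both $\phiop$ and $\phiop^{\dagger}$ are reduced $q$-polynomials, i.e.\ of degree $<q^m$. Two such polynomials define the same map on $\Fqm$ if and only if they have identical coefficients, since a nonzero polynomial of degree $<q^m$ cannot vanish at all $q^m$ points of $\Fqm$. This uniqueness is the one genuinely needed ingredient; everything else is bookkeeping. So $\phiop=\phiop^{\dagger}$ as operators exactly when the coefficient vectors agree. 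The $X$-coefficients are both $\lambda$.

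\emph{Matching the twisted coefficients.} Since $1\le k\le m-1$, we also have $1\le m-k\le m-1$, so neither twisted monomial collides with $X$.
\begin{itemize}
\item If $k\ne m-k$, the monomials $X^{q^k}$ and $X^{q^{m-k}}$ are distinct. Then $\phiop$ has coefficient $\mu\ne0$ on $X^{q^k}$, while $\phiop^\dagger$ has coefficient $0$ there, so the operators differ.
\item If $k=m-k$, i.e.\ $2k=m$, equality holds iff $\mu^{q^{k}}=\mu$, i.e.\ $\mu\in\F_{q^k}\cap\Fqm$.
\end{itemize}
The condition $2k=m$ is equivalent to $m\mid 2k$ in the range $1\le k\le m-1$. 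In that case $d=\gcd(k,m)=k$, so $\F_{q^k}\cap\Fqm=\F_{q^d}$, and with $\mu\ne0$ this gives $\mu\in\F_{q^d}^*$. This proves the characterization.

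\emph{The hull statements.} Under self-adjointness, Theorem~\ref{thm:hullgeneral} gives $\Hull(\Clm)=\im\phiop\cap\ker\phiop^\dagger=\im\phiop\cap\ker\phiop$. In the bijective case, the code is LCD by the Remark preceding Theorem~\ref{thm:master}, or directly because $\ker\phiop=0$. In the singular case with $\mu\ne0$, Theorem~\ref{thm:hullfrob} gives $\dim\Hull=\delta$ with $0\le\delta\le d$. It also says $\delta=d$ iff $\ker\phiop^\dagger$ is totally isotropic, and here $\ker\phiop^\dagger=\ker\phiop$. (When $\mu=0$ and $\lambda\ne0$ the operator is bijective, so only the LCD clause applies.)
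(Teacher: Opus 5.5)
Your proof is correct and takes essentially the paper's route: you compare $\phiop$ coefficientwise with the explicit adjoint from Lemma~\ref{lem:adjoint}, then specialize Theorems~\ref{thm:hullgeneral} and~\ref{thm:hullfrob} using $\ker\phiop^\dagger=\ker\phiop$. Your version is slightly more careful than the paper's, since you justify the coefficient comparison by uniqueness of reduced $q$-polynomials and note that $m\mid 2k$ with $1\le k\le m-1$ forces $m=2k$ and $d=k$.
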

\begin{proof}
{If $\mu=0$, then $\phiop=\lambda I=\phiop^\dagger$, so self-adjointness is automatic.  Assume henceforth that $\mu\ne0$.} By Lemma~\textup{\ref{lem:adjoint}}, the adjoint is $\phiop^{\dagger}(y) = \lambda y + \mu^{q^{m-k}}y^{q^{m-k}}$. Matching this against $\phiop(y) = \lambda y + \mu y^{q^k}$ requires the twist exponents to agree modulo $m$, giving $m-k\equiv k\pmod{m}$, which is equivalent to $m\mid 2k$. It also requires the scalar coefficients to match, giving $\mu^{q^{m-k}}=\mu$. Since $m\mid 2k$, we have $\gcd(m-k,m)=\gcd(k,m)=d$, so the fixed points of $x \mapsto x^{q^{m-k}}$ in $\Fqm$ are exactly the elements of $\mathbb{F}_{q^d}$. Thus, $\mu^{q^{m-k}}=\mu$ if and only if $\mu\in\mathbb{F}_{q^d}$.
Assume now that these conditions hold, so $\phiop=\phiop^{\dagger}$. Theorem~\textup{\ref{thm:hullgeneral}} dictates that $\Hull(\Clm) = \im(\phiop)\cap\ker(\phiop^{\dagger})$, which immediately simplifies to $\im(\phiop)\cap\ker(\phiop)$. Because the operator and its adjoint are identical, {the general two-case description }of Theorem~\textup{\ref{thm:hullfrob}} {specializes directly}. 
If the operator is bijective, then $\ker(\phiop)=\{0\}$, implying $\dim\Hull(\Clm)=0$ and the code is LCD. If it is non-bijective, Theorem~\textup{\ref{thm:hullfrob}} {gives }$\dim\Hull(\Clm)=\delta=\dim(\im\phiop\cap\ker\phiop)$ with $0\le\delta\le d$. The extremal condition $\delta=d$ holds if and only if $\ker\phiop$ is totally isotropic.
\end{proof}





When $\phiop$ is self-adjoint, the Gram and adjoint descriptions of the hull coincide{.  The discriminant still detects bijectivity, while LCD-ness at a singular parameter is decided by the intersection $\im(\phiop)\cap\ker(\phiop)$}.
\begin{remark}
\label{rem:discriminant-sa}
When the operator is self-adjoint ({$m\mid2k$}) and we restrict to parameters {$(\rho,1)\in\Fq^2$}, the Gram and adjoint descriptions {agree.  The condition $\Delta(\rho)\ne0$ is equivalent to bijectivity and therefore implies that }$C_{\rho,1}$ is LCD{.  At a root of $\Delta$, however, the operator is singular and LCD-ness is determined by whether $\im(\phi_{\rho,1})\cap\ker(\phi_{\rho,1})$ is trivial}. 
In the specific case where $\operatorname{char}(\Fq)=2$ (and assuming the operator is self-adjoint with respect to the trace), Proposition~\textup{\ref{prop:char2vanish}} gives $G_1=0$. The Gram matrix simplifies to $G_{\rho,1}=(\rho+1)^2G_0$, and the discriminant becomes $\Delta(\rho)=\det(G_0)\cdot(\rho+1)^{2m}$, which has $\rho=1$ as its unique zero in $\Fq$. 
For general parameters over $\PP^1(\Fqm)$, the {base-field discriminant does not apply; in the self-adjoint case the hull is determined by }$\im(\phiop)\cap\ker(\phiop)$ as in Theorem~\textup{\ref{thm:selfadjoint}}.
\end{remark}
\begin{remark}
\label{rem:consistency}
More generally, for any $(\lambda,\mu)\in\Fq^2$, the Gram approach and the adjoint approach are always consistent. Indeed, since $(\im\phiop)^{\perp}=\ker\phiop^{\dagger}$ (Theorem~\textup{\ref{thm:hullgeneral}}), the identity $\rank G_{\lambda,\mu}=\rank\phiop-\dim\Hull(\Clm)$ from Theorem~\textup{\ref{thm:master}} perfectly matches $\dim\Hull(\Clm)=\dim(\im\phiop\cap\ker\phiop^{\dagger})$. Note that while the Gram formula~\eqref{eq:GlamuFrob} requires $(\lambda,\mu)\in\Fq^2$, the adjoint formula~\eqref{eq:hullgeneral} holds universally for all $(\lambda,\mu)\in\Fqm^2$.
\end{remark}





We close the section with a worked {example.  The general statements above are proved algebraically; the SageMath code is not used as a proof of those statements.  It performs an exhaustive verification for the single field $\F_{64}$ and serves to check the corrected formulas and to display the resulting strata explicitly}.

\begin{example}
\label{ex:frob-q4m3k1}
Let $q=4$, $m=3$, $k=1$, so $d=1$ and the ambient field is $\mathbb{F}_{64}$.
Consider $\phi_{\rho,1}(x)=\rho x+x^4$ for $\rho\in\mathbb{F}_{64}$.
Remark~\ref{rem:casesummary} predicts $\dim\Hull\in\{0,1\}$.

\medskip\noindent\textit{Bijectivity.}
By Lemma~\textup{\ref{lem:kerfrob}}, $\phi_{\rho,1}$ is non-bijective iff
$(-\rho)^{(q^m-1)/(q^d-1)}=(-\rho)^{21}=1$ in $\mathbb{F}_{64}^*$.
Since $\operatorname{char}=2$, $-1=1$, so this becomes $\rho^{21}=1$.  The elements
satisfying this form the unique subgroup of order $21$ in $\mathbb{F}_{64}^*$ (which
has order $63=3\times 21$).  Every $\rho\in\mathbb{F}_4^*$ has $\rho^3=1$, so
$\rho^{21}=(\rho^3)^7=1$: all three elements of $\mathbb{F}_4^*$ give non-bijective
$\phi_{\rho,1}${; equivalently, $\varepsilon=1$}.  The remaining $21-3=18$ non-bijective points lie
in $\mathbb{F}_{64}\setminus\mathbb{F}_4$.

\medskip\noindent\textit{Full computation over $\PP^1(\mathbb{F}_{64})$.}
{An exhaustive SageMath computation over this fixed field }(see Appendix) gives
\[
  |\PP^1(\mathbb{F}_{64})|\;=\;65,\qquad
  |\mathcal{S}_0|\;=\;60,\qquad
  |\mathcal{S}_1|\;=\;5.
\]
All five non-LCD points {are singular and }have $\dim\Hull=1$.  For each:
\[
  \rank(\phi_{\rho,1})=2,\qquad
  \rank(G_{\rho,1})=1,\qquad
  \dim\Hull(C_{\rho,1})=1.
\]
By Theorem~\textup{\ref{thm:hullfrob}}, $\delta=d=1$ {for these five parameters }because
$\ker\phi_{\rho,1}^{\dagger}$ is totally isotropic (verified computationally).
The remaining $21-5=16$ non-bijective projective points all have $\dim\Hull=0$,
{and their hull is trivial; equivalently, the relevant intersection with $\ker\phi^\dagger$ is zero}.

The five non-LCD parameters are
\[
  \rho\in\{a^5+a^4+a^2+1,\;a^4+a^2+a+1,\;a^3+a^2+a,\;a^5+a,\;a^3+a^2+a+1\},
\]
where $a$ is a fixed generator of $\mathbb{F}_{64}^*$.  Among the cube roots of unity
$\{a^3+a^2+a,\;a^3+a^2+a+1,\;1\}$, only the first two give $\dim\Hull=1$; at $\rho=1$
we have {a singular operator }but $\ker\phi_{1,1}^{\dagger}$ is not totally
isotropic, giving $\dim\Hull=0$.  This confirms that the non-LCD locus in
$\PP^1(\mathbb{F}_{64})$ is strictly larger than the set of non-trivial cube roots of
unity, and that the Gram-pencil discriminant criterion on $\PP^1(\mathbb{F}_4)$
does not extend verbatim to $\PP^1(\mathbb{F}_{64})$.
\end{example}

\begin{remark}
\label{rem:quantum}
The {hull information obtained here can be translated into entanglement consumption only after fixing a standard EAQECC construction and the relevant classical duality.  For a classical $[n,k]$ code $C\subseteq\Fq^n$ with Euclidean dual and parity-check matrix $H$, the usual parity-check construction uses
}\[
{c=}\rank{(HH^{\top})=n-k-\dim}\Hull{(C)
}\]
{ebits.  Equivalently, the generator Gram rank satisfies $\rank(GG^{\top})=k-\dim\Hull(C)$ and appears in the dual version of the construction.  Thus the hull determines a complementary Gram rank rather than being, by itself, }the ebit count{.  Accordingly, our hull formulas may be inserted into these identities whenever the code under consideration is represented in a coordinate model compatible with the Euclidean duality used by the chosen EAQECC construction.  We do not identify the trace-dual image codes or the Delsarte-dual }rank-distance {codes with a specific EAQECC without this additional choice}.
\end{remark}

\section{Conclusions and open problems}
\label{sec:conclusions}

In this paper we developed a unified Gram-matrix approach for computing the hull of two families of linearized polynomial codes over $\Fqm$.  The master hull--rank theorem (Theorem~\textup{\ref{thm:master}}) reduces the {image-code }problem to a rank comparison between an operator and its associated Gram matrix{.  The factorization $G_T=A_T^{\top}G_0A_T$ shows, in particular, that every bijective }image-code {operator gives an LCD code and that the determinant of the base-field Gram pencil detects singularity, not non-LCD behavior on the bijective locus}.  For the Frobenius twist $L=X^{q^k}$, {diagonalization over a splitting field gives an explicit discriminant factorization; at a base-field root the operator has nullity $d$ }and the {hull dimension is $0$ or $d$.  Over $\PP^1(\Fqm)$, the adjoint formula reduces the singular case to }$\im\phiop\cap\ker\phiop^{\dagger}$, with the extremal value $d$ characterized by Proposition~\textup{\ref{prop:isotropic}}.

Several natural directions remain open.  It would be valuable to extend the Gram-matrix
approach beyond the two-parameter family to more general $\Fqm$-linear combinations of
$q$-polynomials, where the structure matrices $\Gamma_{ij}$ are no longer circulant and
closed-form discriminants may not exist.  A second direction is to determine, for
prescribed hull dimension $h$, whether the Frobenius family contains members
that are simultaneously MRD in the rank metric; the interplay between the rank-distance
bound and the hull constraint is not yet understood.  Finally, the quantum coding
interpretation (Remark~\textup{\ref{rem:quantum}}) raises the question of whether the
families studied here can be used to construct EA-QECCs meeting a Gilbert--Varshamov-type
bound with prescribed ebit count, extending the classical result of Sendrier~\cite{Sendrier04}
to the entanglement-assisted setting.

\bigskip
\noindent
\textbf{Acknowledgements.}  
{The authors express their deep appreciation to the editor, as well as to the anonymous referee, whose thorough reading and constructive comments have greatly improved the paper.} 
The third-named author (PS) would like to thank  the first-named author (DB) for the invitation at the Dipartimento di Matematica e Informatica  at Universit\`a degli Studi di Perugia, Italy, and the great working conditions while this paper was being written. The first-named author (DB) and second-named author (GGG) thank the Italian National Group for Algebraic and Geometric Structures and their Applications (GNSAGA—INdAM) which supported the research.
The authors would like to thank Prof. Dibyendu Roy for help with the SageMath code.

\appendix
\section{SageMath code for Example~\ref{ex:frob-q4m3k1}}

\begin{lstlisting}
# =============================================================================
# Verification of Example~\ref{ex:frob-q4m3k1}
# Parameters: q=4, m=3, k=1, L(x) = x^{q^k} = x^4
# Field: F_{q^m} = GF(4^3) = GF(64)
#
# Key results:
#   - G_1 != 0  (char 2 does NOT force G_1 = 0; that requires m | 2k)
#   - Hull computed via rank(phi) - rank(G), G[i][j] = Tr(phi(e_i)*phi(e_j))
#   - Also cross-checked via adjoint: Hull = im(phi) cap ker(phi^dag)
#   - Result: 60 LCD points, 5 non-LCD points (all hull dim = 1)
# =============================================================================

from sage.all import *
from collections import Counter

q = 4
m = 3
k = 1

# -----------------------------------------------------------------------
# Field setup
# -----------------------------------------------------------------------
F.<a> = GF(q^m, modulus='primitive')
p = F.characteristic()                  # 2
r = ZZ(q).exact_log(p)                  # 2  (q = p^r)
Fq = GF(q)                              # abstract GF(4) for matrix entries

# GF(4) as subfield of GF(64): elements satisfying x^q = x
Fq_in_F = [x for x in F if x^q == x]
alpha_q  = next(x for x in Fq_in_F if x != F(0) and x != F(1) and x^(q-1) == F(1))

print("=" * 60)
print(f"Field: GF({q}^{m}) = GF({q^m})")
print(f"p = {p}, r = {r}, q = p^r = {q}")
print(f"gcd(m, char) = gcd({m}, {p}) = {gcd(m, p)}  (coprime: {gcd(m, p) == 1})")
print(f"|F_{{q^m}}*| = {q^m - 1} = {factor(q^m - 1)}")
print(f"GF(4) elements inside GF(64): {Fq_in_F}")
print(f"GF(4) primitive element alpha_q = {alpha_q}")
print()

# -----------------------------------------------------------------------
# Relative trace  Tr_{F/Fq}
# -----------------------------------------------------------------------
def Tr(z):
    return sum(z^(q^i) for i in range(m))   # result lies in GF(4) inside F

# -----------------------------------------------------------------------
# Normal basis for F/Fq.
# beta is a normal element iff {beta, beta^q, ..., beta^{q^{m-1}}} are
# GF(q)-linearly independent.  We check this via the (m*r) x (m*r) matrix
# over GF(p) whose columns are the GF(p)-coordinate vectors of the m conjugates
# together with their alpha_q-multiples.
# -----------------------------------------------------------------------
def is_normal(beta):
    conjs  = [beta^(q^i) for i in range(m)]
    vecs   = [vector(GF(p), beta_i._vector_()) for beta_i in conjs]
    avecs  = [vector(GF(p), (alpha_q * beta_i)._vector_()) for beta_i in conjs]
    return matrix(GF(p), vecs + avecs).rank() == m * r

beta         = next(b for b in F if b != F(0) and is_normal(b))
normal_basis = [beta^(q^i) for i in range(m)]

print(f"Normal element beta = {beta}")
print(f"Normal basis: {normal_basis}")
print()

# -----------------------------------------------------------------------
# GF(q)-coordinate extraction via the change-of-basis matrix.
# We build the (m*r) x (m*r) = 6x6 GF(p)-matrix M whose columns are
# [e_0, e_1, e_2, alpha_q*e_0, alpha_q*e_1, alpha_q*e_2] in GF(p)^6.
# Its inverse gives the GF(q)-coordinates of any element of F.
# -----------------------------------------------------------------------
def gfp_vec(x):
    return vector(GF(p), x._vector_())

M6 = matrix(GF(p),
            [gfp_vec(e)        for e in normal_basis] +
            [gfp_vec(alpha_q*e) for e in normal_basis]).transpose()   # 6x6
M6_inv = M6.inverse()

def coords_Fq(x):
    """Return [c0, c1, ..., c_{m-1}] in GF(q) (inside F) with sum c_i*e_i = x."""
    y = M6_inv * gfp_vec(x)     # length m*r = 6, entries in GF(p)
    result = []
    for i in range(m):
        a0 = y[i]               # GF(p) component
        a1 = y[i + m]           # alpha_q component
        result.append(F(a0) + F(a1) * alpha_q)
    return result

# Verify on basis elements
for j in range(m):
    c = coords_Fq(normal_basis[j])
    assert all(c[i] == (F(1) if i == j else F(0)) for i in range(m)), \
        f"coords_Fq failed on e_{j}"
print("coords_Fq verified on all basis elements.")

# -----------------------------------------------------------------------
# Map GF(q)-elements-as-F-elements to abstract Fq = GF(4) for matrices.
# Build an explicit embed dict: {0->0, 1->1, alpha_q->gen, alpha_q^2->gen^2}
# -----------------------------------------------------------------------
Fq_gen = Fq.gen()
embed = {
    F(0):       Fq(0),
    F(1):       Fq(1),
    alpha_q:    Fq_gen,
    alpha_q^2:  Fq_gen^2,
}

def to_Fq(x):
    """Convert a GF(4)-element x (living inside F) to abstract GF(4)."""
    return embed[x]

# -----------------------------------------------------------------------
# phi_{lam,mu}(x) = lam*x + mu*x^{q^k}
# -----------------------------------------------------------------------
def phi(lam, mu, x):
    return lam * x + mu * x^(q^k)

# Matrix of phi in the normal basis, as a matrix over abstract Fq = GF(4)
def phi_matrix(lam, mu):
    cols = [coords_Fq(phi(lam, mu, normal_basis[j])) for j in range(m)]
    return matrix(Fq, m, m, lambda i, j: to_Fq(cols[j][i]))

# Gram matrix G[i][j] = Tr(phi(e_i)*phi(e_j)), entries in Fq
def G_matrix(lam, mu):
    return matrix(Fq, m, m,
                  lambda i, j: to_Fq(Tr(phi(lam, mu, normal_basis[i]) *
                                         phi(lam, mu, normal_basis[j]))))

def hull_dim(lam, mu):
    r_phi = phi_matrix(lam, mu).rank()
    r_G   = G_matrix(lam, mu).rank()
    return r_phi - r_G

# -----------------------------------------------------------------------
# Adjoint: phi^dag(y) = lam*y + mu^{q^{m-k}} * y^{q^{m-k}}
# -----------------------------------------------------------------------
def phi_dag(lam, mu, y):
    return lam * y + mu^(q^(m-k)) * y^(q^(m-k))

def hull_via_adjoint(lam, mu):
    """Dimension of im(phi) cap ker(phi^dag) -- independent verification."""
    im_phi  = set(phi(lam, mu, x) for x in F)
    ker_dag = set(y for y in F if phi_dag(lam, mu, y) == F(0))
    inter   = [z for z in im_phi & ker_dag if z != F(0)]
    if not inter:
        return 0
    vecs = matrix(GF(p), [gfp_vec(z) for z in inter])
    return vecs.rank() // r

# -----------------------------------------------------------------------
# Structure matrices for reference
# -----------------------------------------------------------------------
G0 = matrix(Fq, m, m, lambda i, j: to_Fq(Tr(normal_basis[i] * normal_basis[j])))
G1 = matrix(Fq, m, m,
            lambda i, j: to_Fq(Tr(normal_basis[i] * normal_basis[j]^(q^k)) +
                                 Tr(normal_basis[i]^(q^k) * normal_basis[j])))
G2 = matrix(Fq, m, m, lambda i, j: to_Fq(Tr(normal_basis[i]^(q^k) * normal_basis[j]^(q^k))))

omega = [Tr(beta^(1 + q^j)) for j in range(m)]
print(f"omega = {omega}")
print(f"\nG_0 =\n{G0}")
print(f"det(G_0) = {G0.det()}  (nonzero: {G0.det() != 0})")
print(f"\nG_1 =\n{G1}")
print(f"G_1 is zero: {G1.is_zero()}")
print(f"  (G_1 = 0 iff m | 2k;  {m} | {2*k}: {(2*k) % m == 0})")
print(f"\nG_2 =\n{G2}")
print(f"G_2 == G_0: {G2 == G0}")

# -----------------------------------------------------------------------
# Compute strata over PP^1(F)
# -----------------------------------------------------------------------
print()
print("=" * 60)
print(f"HULL STRATA OVER PP^1(GF({q^m}))")
print("=" * 60)

strata = Counter()
strata[hull_dim(F(1), F(0))] += 1    # rho = infty
for lam in F:
    strata[hull_dim(lam, F(1))] += 1

for h in sorted(strata):
    print(f"  S_{h}: {strata[h]} projective points")
print(f"  Total: {sum(strata.values())} = q^m + 1 = {q^m + 1}")

# -----------------------------------------------------------------------
# Non-LCD points: detailed report with adjoint cross-check
# -----------------------------------------------------------------------
print()
print("=" * 60)
print("NON-LCD POINTS  (hull dim > 0)")
print("=" * 60)

exp_nonbij = (q^m - 1) // (q^gcd(k, m) - 1)

non_lcd = [(lam, F(1)) for lam in F if hull_dim(lam, F(1)) > 0]
if hull_dim(F(1), F(0)) > 0:
    non_lcd.append((F(1), F(0)))

for (lam, mu) in non_lcd:
    h     = hull_dim(lam, mu)
    h_adj = hull_via_adjoint(lam, mu)
    rphi  = phi_matrix(lam, mu).rank()
    rG    = G_matrix(lam, mu).rank()
    rho   = lam / mu if mu != F(0) else "infty"
    nonbij = (mu != F(0)) and ((-lam / mu)^exp_nonbij == F(1))

    print(f"rho = {rho}:")
    print(f"  rank(phi) = {rphi},  rank(G) = {rG},  hull = {h}  [adjoint: {h_adj}]")
    print(f"  non-bijective: {nonbij}")
    print(f"  G =")
    for row in G_matrix(lam, mu).rows():
        print(f"    {list(row)}")
    print()

# -----------------------------------------------------------------------
# m-th roots of unity
# -----------------------------------------------------------------------
print("=" * 60)
print(f"m-th roots of unity in GF({q^m})*  (rho^m = 1):")
m_roots = [x for x in F if x != F(0) and x^m == F(1)]
print(f"  {m_roots}  (count: {len(m_roots)})")
for rho in m_roots:
    print(f"  rho = {rho}:  hull dim = {hull_dim(rho, F(1))}")

# -----------------------------------------------------------------------
# Summary
# -----------------------------------------------------------------------
print()
print("=" * 60)
print("SUMMARY")
print("=" * 60)
print(f"  PP^1(GF({q^m})) has {q^m + 1} projective points.")
print(f"  S_0 (LCD):  {strata[0]} points")
for h in sorted(strata):
    if h > 0:
        print(f"  S_{h}:        {strata[h]} points  (hull dim = {h})")
print()
print(f"  G_1 = 0 iff m | 2k:  {m} | {2*k} is {(2*k) % m == 0}  =>  G_1 != 0.")
print(f"  All ranks computed over GF({q}) using the normal basis.")
print(f"  G[i][j] = Tr(phi(e_i)*phi(e_j)) directly -- no quadratic formula assumed.")
print(f"  Adjoint hull = im(phi) cap ker(phi^dag) verified at every non-LCD point.")

\end{lstlisting}

\end{document}